\newtheorem{definition}{Definition}
\newtheorem{theorem}{Theorem}
\newtheorem{lemma}{Lemma}
\def\D{\mathcal{D}}
\def\F{\mathcal{F}}
\def\H{\mathcal{H}}
\def\O{\mathcal{O}}
\def\P{\mathcal{P}}
\def\S{\mathcal{S}}
\def\U{\mathcal{U}}
\def\N{\mathcal{N}}
\def\R{\mathbb{R}}
\def\rhoab{\rho^{AB}}
\def\sigmaab{\sigma^{AB}}
\def\Tr{\mathrm{Tr}}
\def\roof{\mathrm{cr}}
\def\top{\mathrm{top}}
\def\Ud{\U^{\D}}
\def\Uo{\U^{\O}}
\def\sn{\mathcal{N}}
\def\rk{\mathcal{R}}
\def\psetord{\Delta^\downarrow_{d}} 
\def\Ha{\H^A}
\def\Hb{\H^B}
\def\Hab{\H^{AB}}
\newcommand{\locc}[0]{\underset{\mathrm{LOCC}}{\rightarrow}}
\DeclareMathOperator{\supp}{supp}
\DeclareMathOperator{\tr}{tr}
\DeclareMathOperator{\rank}{rank}
\DeclareMathOperator{\eig}{eig}
\newcommand{\rvline}{\hspace*{-\arraycolsep}\vline\hspace*{-\arraycolsep}}
\begin{document}

\title{Extending Schmidt vector from pure to mixed states for characterizing entanglement}

\author{F. Meroi}
\affiliation{Facultad de Ciencias Exactas, Ingenier\'ia y Agrimensura, Universidad Nacional de Rosario, Rosario, Argentina}

\author{M. Losada}
\affiliation{Facultad de Matemática, Astronomía, Física y Computación , Universidad Nacional de C\'{o}rdoba, C\'ordoba, Argentina}

\author{G.M. Bosyk}
\email{gbosyk@gmail.com}
\affiliation{CONICET-Universidad de Buenos Aires, Instituto de Ciencias de la Computación (ICC), Buenos Aires, Argentina}

\date{\today}

\begin{abstract}

In this study, we enhance the understanding of entanglement transformations and their quantification by extending the concept of Schmidt vector from pure to mixed bipartite states, exploiting the lattice structure of majorization. 
The Schmidt vector of a bipartite mixed state is defined using two distinct methods: as a concave roof extension of Schmidt vectors of pure states, or equivalently, from the set of pure states that can be transformed into the mixed state through local operations and classical communication (LOCC). 
We demonstrate that the Schmidt vector fully characterizes separable and maximally entangled states.  
Furthermore, we prove that the Schmidt vector is monotonic and strongly monotonic under LOCC, giving necessary conditions for conversions between mixed states.
Additionally, we extend the definition of the Schmidt rank from pure states to mixed states as the cardinality of the support of the Schmidt vector and show that it is equal to the Schmidt number introduced in previous work [Phys. Rev. A \textbf{61}, 040301 (R), 2000]. 
Finally, we introduce a family of entanglement monotones by considering concave and symmetric functions applied to the Schmidt vector.

\end{abstract}

\maketitle

\section{Introduction}

Entanglement is a fundamental concept in quantum theory~\cite{Horodecki2009,Guhne2009} and a relevant resource for diverse quantum information applications, such as quantum teleportation~\cite{Bennett1993}, quantum key distribution~\cite{Ekert1991,Curty2004}, quantum networks\cite{Azuma2023}, among others. 
Its detection, quantification, and characterization are crucial not only for theoretical reasons but also for practical applications. 

In the realm of bipartite pure states, these issues are fully understood using the Schmidt decomposition of the state~\cite{NielsenBook}. 
In this case, the Schmidt vector, defined as the vector formed by the squares of the Schmidt coefficients, and the Schmidt rank, representing the size of its support, play key roles in characterizing entanglement. 
Specifically, a Schmidt rank greater than one indicates the presence of entanglement and vice versa. 
Furthermore, converting an initial state to a final state through local operations and classical communication (LOCC) is completely characterized by a majorization relation between the Schmidt vectors of the states~\cite{Nielsen1999}. 
Remarkably, any measure of entanglement for bipartite pure states can be expressed as a symmetric and concave function applied to the Schmidt vector~\cite{Vidal2000}.
Additionally, the largest Schmidt coefficient of a given bipartite pure state is useful for constructing fidelity-based witness operators~\cite{Bourennane2004}.

However, in realistic scenarios, states are affected by noise, and due to phenomena like decoherence, the state is generally mixed. This makes entanglement characterization more complex. A natural question arises: can we define a notion of Schmidt vector that is useful for characterizing entanglement even for mixed states? 

We provide a positive answer to this problem by exploiting the majorization lattice techniques developed in \cite{Bosyk2021} for the case of quantum coherence. Specifically, we extend the notion of a Schmidt vector from pure to mixed states using two distinct methods that yield the same vector: 
(i) as a concave roof extension of Schmidt vectors for pure states, or (ii) as the supremum in the majorization lattice of all Schmidt vectors for pure states that can be transformed into the given state via LOCC. 

We demonstrate that this Schmidt vector fully characterizes separable and maximally entangled states. 
Additionally, we prove that the Schmidt vector exhibits monotonic and strong monotonic properties under LOCC, providing necessary conditions for conversions between mixed states. 
Moreover, we propose the cardinality of the support of the Schmidt vector as an extension of the Schmidt rank from pure-to-mixed state, and we show that it is equivalent to the Schmidt number \cite{Terhal2000,Sanpera2001}.
Finally, we introduce a family of entanglement monotones by applying concave and symmetric functions to the Schmidt vector.

The paper is organized as follows. 
In Section~\ref{sec:prelimaries}, we review the main concepts of majorization (Sec.~\ref{sec:majorization}) and entanglement (Sec.\ref{sec:entanglement}) theories necessary for our purposes.
In Section~\ref{sec:results}, we provide our main results. 
First, we introduce the notion of the Schmidt vector of an arbitrary bipartite state and discuss its main properties (Sec.~\ref{sec:schimdt_vector}). 
Then, we show the equivalence between the Schmidt number and the cardinality of the support of the Schmidt vector (Sec.~\ref{sec:equivalence_Schmidtrank_Schmidtnumber}). We give an example of the Schmidt vector for a family of quantum states (Sec.~\ref{sec:example}) and we show how to construct a family of entanglement monotones from it (Sec.~\ref{sec:Schmidt_vector_entanglement_monotone}).
Finally, we give some concluding remarks in Section~\ref{sec:conclusions}. 
For readability, auxiliary lemmas and proofs, and an explanation of numerical calculations are presented separately in Appendices~\ref{app:aux_proofs}  and~\ref{app:numerical}, respectively.

\section{Preliminaries}
\label{sec:prelimaries}

\subsection{Majorization}
\label{sec:majorization}

Here, we recall the definition of majorization and some related results necessary for our purposes. 
For a more comprehensive review of majorization theory, see, for instance, \cite{MarshallBook} and its applications on quantum information, see, for example, \cite{Nielsen2001,Bellomo2019}.

We will restrict our attention to majorization between probability vectors. 
Let us define the set of $d$-dimensional probability vectors as $\Delta_{d}= \big\{(u_1, \ldots, u_{d}) \in \mathbb{R}^d: u_i \geq 0,  \sum_{i= 1}^{d} u_i =1 \big\}.$
Additionally, we denote  $u^{\downarrow} = (u^\downarrow _1, \ldots, u^\downarrow_d)$ as the probability vector $u$ with entries ordered non-increasingly, and we collect these vectors in the subset $\Delta^\downarrow_{d} \subseteq \Delta_{d}$.

The majorization relation is defined as follows.
\begin{definition}[\textbf{Majorization relation}]
Given $u,v \in \Delta_{d}$, we say that $u$ is majorized by $v$ (denoted as $u \preceq v$) if, and only if, $\sum_{i=1}^{k} u^\downarrow_i \leq \sum_{i=1}^{k} v^\downarrow_i$, for all $k \in \{1, \ldots, d-1\}$.
\end{definition}

Another way to address majorization is through the notion of the Lorenz curve for a probability vector $u \in \Delta_d$, denoted as $L_u$.
Specifically, it is defined as the curve $L_u: [0,d] \to [0,1]$  given by the linear interpolation of the points $\{(j,s_{j}(u^\downarrow))\}_{j= 0}^d$, where $s_j(u)=\sum^j_{i=1} u_i$ with the convention $s_0 = 0$. 
 This curve is an increasing and concave function. Majorization $u \preceq v$ holds if and only if $L_u(x) \leq L_v(x)$ for all $x \in [0,d]$.

From an order-theoretic point of view, the majorization relation is a partial order on the set $\Delta^\downarrow_{d}$, meaning it is a binary relation that is reflexive, antisymmetric, and transitive. 
Thus, the tuple $\braket{\Delta^\downarrow_{d},\preceq}$ is a partially order set (POSET). 
Moreover, the supremum and infimum of any subset of $\U \subset \psetord$ exist, denoted as  $\bigvee \U$ and $\bigwedge \U$, respectively. 
Therefore, we have the following result~\cite{Bapat1991,Bosyk2019}. 
\begin{theorem}[\textbf{Majorization lattice}]
The tuple $\braket{\Delta^\downarrow_{d},\preceq}$ is a complete lattice.    
\end{theorem}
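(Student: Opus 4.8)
The plan is to establish completeness by showing directly that \emph{infima} of arbitrary subsets exist in $\braket{\psetord,\preceq}$, and then to obtain arbitrary suprema from the existence of a greatest element, the two together giving a complete lattice. First I would record that the poset is bounded: the vector $(1,0,\dots,0)$ majorizes every element of $\psetord$, while $(1/d,\dots,1/d)$ is majorized by every element, so these are the top and bottom elements respectively (this also disposes of the empty subset, whose infimum is the top and whose supremum is the bottom).

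For infima, fix $\U\subseteq\psetord$ and build a candidate $w$ through its cumulative sums: set $s_0(w)=0$, $s_d(w)=1$, and $s_k(w)=\inf_{u\in\U}s_k(u)$ for $1\le k\le d-1$, then $w_k=s_k(w)-s_{k-1}(w)$. The substantive point is to verify $w\in\psetord$, which reduces to a single auxiliary fact: the pointwise infimum of a family of concave sequences is again concave. This follows from an elementary three-term estimate — if $c_k=\inf_{\alpha}a^{\alpha}_k$, then for any $\varepsilon>0$ pick $\alpha$ with $a^{\alpha}_k\le c_k+\varepsilon$ and use $c_{k-1}+c_{k+1}\le a^{\alpha}_{k-1}+a^{\alpha}_{k+1}\le 2a^{\alpha}_k\le 2c_k+2\varepsilon$, then let $\varepsilon\to 0$ — or equivalently from the observation that the hypograph of an infimum is an intersection of convex hypographs. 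Nonnegativity and the non-increasing ordering of the $w_k$ then follow from concavity, and $\sum_k w_k=s_d(w)-s_0(w)=1$, so $w\in\psetord$. By construction $s_k(w)\le s_k(u)$ for all $k$ and all $u\in\U$, so $w$ is a lower bound of $\U$; and any lower bound $w'$ satisfies $s_k(w')\le s_k(u)$ for all such $u$, hence $s_k(w')\le s_k(w)$, i.e. $w'\preceq w$. Therefore $w=\bigwedge\U$.

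Finally I would invoke the standard order-theoretic principle that a poset possessing a greatest element in which every subset admits an infimum is automatically a complete lattice: given $\U\subseteq\psetord$, its set of upper bounds is nonempty (it contains $(1,0,\dots,0)$), and a routine check shows the infimum of that set — which exists by the previous step — is the least upper bound of $\U$. Hence $\bigvee\U$ exists as well, and $\braket{\psetord,\preceq}$ is a complete lattice.

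I expect the concavity-preservation step to be the only real obstacle, and it is worth noting why it cannot be circumvented by symmetry: unlike the infimum, the supremum is \emph{not} given by the cumulative sums $k\mapsto\sup_{u\in\U}s_k(u)$, since the pointwise supremum of concave sequences need not be concave. A direct construction of $\bigvee\U$ would instead require passing to the least concave majorant of $\sup_{u\in\U}L_u$ and checking that this concave envelope remains piecewise linear with breakpoints among $\{0,1,\dots,d\}$; routing the proof through infima together with the top element sidesteps this and isolates the single clean lemma above.
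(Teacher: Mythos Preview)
Your argument is correct. Note, however, that the paper does not actually prove this theorem: it is quoted as a known result with citations to \cite{Bapat1991,Bosyk2019}, and the paper only \emph{describes} the supremum construction (the upper envelope of the Lorenz curves, i.e., the least concave majorant of $k\mapsto\sup_{u\in\U}s_k(u)$) without verifying it.

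That said, there is a genuine contrast worth recording. The construction the paper sketches is precisely the one you chose to avoid: building $\bigvee\U$ directly via the concave envelope of the pointwise supremum of Lorenz curves. Your route --- construct $\bigwedge\U$ from the pointwise infimum of cumulative sums (which stays concave for free), then recover $\bigvee\U$ from the existence of a top element via the standard ``infima $+$ top $\Rightarrow$ complete'' principle --- is cleaner as a proof of completeness, because it isolates a single elementary lemma (infima of concave sequences are concave) and needs no upper-envelope step. The paper's description, on the other hand, is more useful operationally: later sections repeatedly compute $\bigvee\Ud(\rhoab)$ and rely on the explicit Lorenz-curve/upper-envelope algorithm, which your existence proof does not supply. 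So your approach buys a shorter self-contained proof; the paper's buys a constructive formula it actually needs downstream.

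One minor tightening: when you say ``nonnegativity\ldots\ follow[s] from concavity'', concavity alone gives only that the increments $w_k$ are non-increasing. You still need the boundary data $s_d(w)=1$ together with $s_{d-1}(w)=\inf_u s_{d-1}(u)\le 1$ to conclude $w_d\ge 0$, after which $w_k\ge w_d\ge 0$ for all $k$. This is implicit in what you wrote but worth making explicit.
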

The algorithms for computing the supremum and infimum of any subset of the majorization lattice can be found in \cite{Cicalese2002,Bosyk2019,Massri2020}.
Here we recall the algorithm for obtaining the supremum of a set $\U \subseteq \psetord$, which will be useful in the next section.
For each $j$, we obtain  $S_j = \sup\{s_j(u) : u \in \U \}$.
Then, we calculate the upper envelope of the polygonal curve given by the linear interpolation of the set of points $\{(j,S_{j})\}_{j= 0}^d$, with the convention $S_0 = 0$.
The result is the Lorenz curve of $\bigvee \U$, $L_{\bigvee \U}$. 
Finally, we have $\bigvee \U = (L_{\bigvee \U}(1),L_{\bigvee \U}(2)-L_{\bigvee \U}(1),\ldots,L_{\bigvee \U}(d)-L_{\bigvee \U}(d-1))$.

The majorization relation is closely related to Schur-concave functions (see, for example, \cite[I.3]{MarshallBook}), which are functions that reverse the preorder relation. 
\begin{definition}[\textbf{Schur-concavity}]
A function $f: \Delta_d \to \mathbb{R}$ is Schur-concave if, for all $u, v \in \Delta_d$ such that $u \preceq v$, $f(u) \geq f(v)$.
\end{definition}
Moreover, suppose that the function $f$ also satisfies $f(u) > f(v)$ whenever $u$ is strictly majorized by $v$, denoted as $u \prec v$ (that is, when $u \preceq v$ and $u^\downarrow \neq v^\downarrow$). In that case, we say that it is strictly Schur-concave.
In particular, generalized entropies, including Shannon, Rényi, and Tsallis entropies, are examples of strictly Schur-concave functions (see, for example, \cite{Bosyk2016}).
Additionally, we have the following result that relates concavity and Schur-concavity~\cite[Prop.C.2.]{MarshallBook}.
\begin{theorem}[\textbf{Symmetry and concavity implies Schur-concavity}]
If $f$ is a symmetric and concave function, then $f$ is Schur-concave. 
\end{theorem}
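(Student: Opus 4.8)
The plan is to reduce the statement to the Hardy--Littlewood--P\'olya characterization of majorization together with Birkhoff's theorem on doubly stochastic matrices. First I would recall that for $u,v \in \Delta_{d}$ one has $u \preceq v$ if and only if there exists a doubly stochastic matrix $D$ with $u = Dv$ (note that $\Delta_{d}$ is convex and invariant under coordinate permutations, so everything below stays inside the domain of $f$). By Birkhoff's theorem, such a $D$ can be written as a convex combination of permutation matrices, $D = \sum_{k} \lambda_k P_k$ with $\lambda_k \geq 0$ and $\sum_k \lambda_k = 1$, so that $u = \sum_k \lambda_k P_k v$.

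The second step is the concavity estimate. Since $f$ is concave on the convex set $\Delta_{d}$,
\[
f(u) = f\left(\textstyle\sum_k \lambda_k P_k v\right) \;\geq\; \sum_k \lambda_k \, f(P_k v).
\]
Because each $P_k v$ is merely a rearrangement of the entries of $v$, symmetry of $f$ yields $f(P_k v) = f(v)$ for every $k$, hence $f(u) \geq \sum_k \lambda_k f(v) = f(v)$, which is exactly the defining inequality of Schur-concavity.

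An alternative route, if one prefers to avoid invoking Birkhoff's theorem, is to use the fact that $u \preceq v$ implies $u^\downarrow$ can be reached from $v^\downarrow$ by a finite sequence of T-transforms (Robin--Hood transfers), each of the form $\lambda I + (1-\lambda)Q$ with $Q$ a transposition; one then checks via a two-variable concavity-plus-symmetry argument that a single T-transform cannot decrease $f$, and iterates. In either version the concavity/symmetry step is immediate; the only real content, and hence the main obstacle, is the structural characterization of majorization (HLP plus Birkhoff, or the T-transform decomposition). Since the paper already cites \cite{MarshallBook}, it is natural to invoke that characterization directly rather than reprove it here.
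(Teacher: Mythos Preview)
Your argument is correct and is the standard one. Note, however, that the paper does not actually give a proof of this statement: it is stated as a known result with a citation to \cite[Prop.~C.2]{MarshallBook} and nothing more. So there is no in-paper proof to compare against; your write-up simply fills in the textbook details that the paper chose to omit by reference.
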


In particular, if $f$ is symmetric and strictly concave, it is strictly Schur-concave.

\subsection{Entanglement}
\label{sec:entanglement}

Here, we will review the most significant results in entanglement theory. For a more detailed review, see for instance \cite{Horodecki2009,Guhne2009}. 

Consider a composite Hilbert space  $\Hab = \Ha \otimes \Hb$ with dimension $d^{AB} = d^{A}d^{B}$, where $d^A=\dim{\Ha}$ and $d^A=\dim{\Hb}$. 
Let us denote by $\S(\Hab)$ the set of all density matrices that act on $\Hab$, and by $\P(\Hab)$ the subset of pure states. 

We recall the definitions of separability and entanglement for bipartite quantum states.
\begin{definition}[\textbf{Separability and Entanglement}]

We say that $\rho^{AB} \in \S(\H^{AB})$ is separable if and only it can be expressed as a convex combination of product states, that is,
\begin{equation}
    \label{eq:sep_def}
    \rho^{AB}= \sum_i p_i \rho_i^A \otimes \rho_i^B,
\end{equation}
where $p_i \geq 0$, $\sum_i p_i=1$, and $\rho^A$ and $\rho^B$ are states that act on $\H^A$ and $\H^B$, respectively. 
If $\rho^{AB}$ is not separable, we call it entangled state.
\end{definition}

We now review the axiomatic approach for quantifying entanglement, which involves identifying the necessary and desirable properties that any entanglement measure should satisfy.
\begin{definition}[\textbf{Entanglement measure}]
\label{def:coherence_measures}
An entanglement measure is a function $E : \S(\H^{AB}) \to \R$ that satisfies the following conditions:
\begin{enumerate}
	 \item Non-negativity: $E\left(\rho^{AB} \right)  \geq 0$ and $E\left(\rho^{AB} \right)  = 0$ for separable states. \label{c1:nonneg}
	\item It is monotonic under LOCC:
    $E(\rhoab) \geq E(\Lambda(\rhoab))$ for all state $\rhoab \in \S(\H^{AB})$ and     all LOCC operation $\Lambda$.  \label{c2:monotonicity}
	\item Satisfies strong monotony under LOCC operations: $E(\rhoab) \geq \sum_{n = 1}^{N}  p_n E(\sigma^{AB}_n)$, for all state $\rhoab \in \S(\H^{AB})$ and    all LOCC operation $\Lambda$ given by Kraus operators $\{K_n\}_{n= 1}^N$, where $p_n = \Tr{K_n \rhoab K^\dag_n}$ and $\sigma^{AB}_n = K_n \rhoab K^\dag_n / p_n$.  \label{c3:strong_monotonicity}
	\item It reaches its maximum value in maximally entangled states: $\arg\max_{\rhoab \in S(\H^{AB})} E(\rhoab)$ is the set of pure states of the form $\rhoab_{\max} = \ket{\psi_{\max}}\bra{\psi_{\max}}$ with $\ket{\psi_{\max}} = \sum^d_{i=1} 1/\sqrt{d} \ket{a_i} \ket{b_i}$ and $d = \min\{\dim{\H^A},\dim{\H^B}\}$.  \label{c4:normalization}
	\item It is convex: $E\left( \sum_{i = 1}^{M}  q_i \rhoab_i\right)  \leq \sum_{i= 1}^{M} q_i E(\rhoab_i)$ .  \label{c5:convexity}
\end{enumerate}
\end{definition}
If $E$ satisfies conditions \ref{c1:nonneg} to \ref{c4:normalization}, we call it an entanglement monotone.

For bipartite pure states, the Schmidt decomposition provides the necessary tools to fully characterize the entanglement (see, for example, \cite[Th. 2.7]{NielsenBook}).

\begin{theorem}[\textbf{Schmidt decomposition}]
\label{th:Schmidt_deco}
%Let us consider two Hilbert spaces $\H^A$ and $\H^B$ of dimensions $d^A$ and $d^B$ respectively.
For any $\ket{\psi} \in \Hab= \Ha \otimes \Ha$ there are orthonormal sets $\{\ket{a_1},\ldots,\ket{a_{d^A}}\}$ of $\Ha$ and $\{\ket{b_1},\ldots,\ket{b_{d^B}}\}$ of $\Hb$, and coefficients (Schmidt coefficients) $\{\psi_i\}^d_{i=1}$ non-negative, unique (unless rearrangements) such that
\begin{equation}
    \label{eq:Schmidt_deco}
    \ket{\psi} = \sum^d_{i=1} \sqrt{\psi_i} \ket{a_i} \ket{b_i},
\end{equation}
with $\sum^d_{i=1} \psi_i=1$ and $d=\min\{d^A,d^B\}$.
\end{theorem}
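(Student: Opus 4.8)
The plan is to reduce the statement to a standard matrix factorization and then read off uniqueness from the reduced states. First I would fix arbitrary orthonormal bases $\{\ket{j}\}_{j=1}^{d^A}$ of $\Ha$ and $\{\ket{k}\}_{k=1}^{d^B}$ of $\Hb$ and write $\ket{\psi}=\sum_{j,k}c_{jk}\ket{j}\ket{k}$ with coefficient matrix $C=(c_{jk})$. Applying the singular value decomposition $C=U\Sigma V^\dagger$ --- with $U,V$ unitary of sizes $d^A,d^B$ and $\Sigma$ carrying non-negative diagonal entries $\sigma_1,\dots,\sigma_d$, $d=\min\{d^A,d^B\}$ --- and defining $\ket{a_i}=\sum_j U_{ji}\ket{j}$ and $\ket{b_i}=\sum_k\overline{V_{ki}}\ket{k}$, I would substitute back to get $\ket{\psi}=\sum_{i=1}^d\sigma_i\ket{a_i}\ket{b_i}$; the families $\{\ket{a_i}\}$ and $\{\ket{b_i}\}$ are orthonormal because $U$ and $V$ are unitary, and there are exactly $d$ terms. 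Setting $\psi_i:=\sigma_i^2$ and using $\braket{\psi|\psi}=1$ gives $\sum_i\psi_i=1$ and the form~\eqref{eq:Schmidt_deco}.

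An equivalent route, more transparent for uniqueness, uses the reduced state $\rho^A=\Tr_B\ket{\psi}\bra{\psi}$. Here I would diagonalize $\rho^A=\sum_{i=1}^r\psi_i\ket{a_i}\bra{a_i}$ with $\psi_i>0$ and $r=\rank\rho^A\le d$, define $\ket{b_i}:=\psi_i^{-1/2}(\bra{a_i}\otimes I)\ket{\psi}$ for $i\le r$, and verify by a direct computation --- using $\bra{a_j}\rho^A\ket{a_i}=\psi_i\delta_{ij}$ --- that the $\ket{b_i}$ are orthonormal and that $\sum_{i\le r}\sqrt{\psi_i}\ket{a_i}\ket{b_i}=(P_A\otimes I)\ket{\psi}$, where $P_A$ is the orthogonal projector onto $\supp\rho^A$. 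Since any $\ket{a}\perp\supp\rho^A$ satisfies $\|(\bra{a}\otimes I)\ket{\psi}\|^2=\bra{a}\rho^A\ket{a}=0$, one has $(P_A\otimes I)\ket{\psi}=\ket{\psi}$, so the sum indeed reconstructs $\ket{\psi}$; finally I would pad both orthonormal systems up to size $d$ and set $\psi_i=0$ for $i>r$.

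For uniqueness, I would note that tracing out $B$ in any representation of the form~\eqref{eq:Schmidt_deco} gives $\rho^A=\sum_i\psi_i\ket{a_i}\bra{a_i}$ (and symmetrically for $\rho^B$ and the $\ket{b_i}$); hence the multiset $\{\psi_i\}$ must be the spectrum of $\rho^A$, which is basis independent, so the Schmidt coefficients are unique up to reordering. The vectors $\ket{a_i}$, and with them the $\ket{b_i}$, are pinned down only up to a joint unitary rotation within each eigenspace of $\rho^A$ and within its kernel --- which is the content of the clause ``unique unless rearrangements''.

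I do not expect a genuine obstacle, since this is a classical fact; the only points that need care are the orthonormality check for the $\ket{b_i}$ and the identity $(P_A\otimes I)\ket{\psi}=\ket{\psi}$ --- which is exactly where the choice of the $\ket{a_i}$ as eigenvectors of $\rho^A$ is used --- together with the bookkeeping that lands the decomposition on exactly $d=\min\{d^A,d^B\}$ terms.
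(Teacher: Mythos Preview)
Your argument is correct and is the standard textbook proof via the singular value decomposition of the coefficient matrix, with uniqueness read off from the spectra of the reduced states. The paper does not supply its own proof of this theorem; it states it as a preliminary fact with a reference to \cite[Th.~2.7]{NielsenBook}, whose proof is essentially the SVD route you outline.
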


We will see that the notion of the Schmidt vector plays a fundamental role in obtaining entanglement measures and characterizing entanglement transformations by LOCC.
For bipartite pure states, the Schmidt vector can be defined as follows~\cite{Zhu2017}.
\begin{definition}[\textbf{Schmidt vector}]
	\label{def:vector_Schmidt_pure}
    The Schmidt vector of a bipartite pure state $\ket{\psi}\bra{\psi}$ is defined as
	\begin{equation}
	\label{eq:vector_Schmidt_pure}
	\mu(\ket{\psi}\bra{\psi}) =  \left(\psi_1, \ldots, \psi_d\right),
	\end{equation}
        where $\{\psi_i\}^d_{i=1}$ are the Schmidt coefficients of $\ket{\psi}$.
\end{definition}

Alternatively, the Schmidt vector can be calculated as the eigenvalues of the reduced density matrix
	\begin{equation}
	\label{eq:vector_Schmidt_pure_PT}
	\mu(\ket{\psi}\bra{\psi}) =  \eig\left(\tr_{B}\left(\ket{\psi}\bra{\psi}\right)\right), 
	\end{equation}

where $\eig(\rho)$ denotes the vector formed by the eigenvalues of $\rho$ and $\tr_{B}$ the partial trace over subsystem $B$.  
 
The Schmidt rank, the number of non-null Schmidt coefficients, serves as a measure of entanglement dimension.
\begin{definition}[\textbf{Schmidt rank}]
	\label{def:Schmidt_rank_pure}
    The Schmidt rank of a pure bipartite state $\ket{\psi}\bra{\psi}$ is defined as
	\begin{equation}
	\label{eq:Schmidt_rank_pure}
        \rk\left(\ket{\psi}\bra{\psi} \right) = \rank\left(\tr_A \ket{\psi}\bra{\psi}\right).
        \end{equation}
        %
       % where $\tr_A$ denotes the partial trace over $\H_A$.
\end{definition}
The Schmidt rank has the following properties:
%
%\begin{observation}
%\hfil
    \begin{itemize}
        \item Equivalence between the Schmidt rank and the cardinality of the Schmidt vector support:
        \begin{equation}
        \label{eq:Schmidt_rank_pure_def2}
        \rk\left(\ket{\psi}\bra{\psi} \right) = \#\supp \mu(\ket{\psi}\bra{\psi}),
	\end{equation}
        where $\#A$ is the cardinality of set $A$ and $\supp(v) = \{i \in \mathbb{N}: v_i>0 \}$ is the support of vector $v$.    
        \item $\rk \left(\ket{\psi}\bra{\psi} \right)\leq d$.
        \item Entanglement characterization: $\ket{\psi}\bra{\psi}$ is entangled if and only if $\rk\left(\ket{\psi}\bra{\psi} \right)>1$.
        \item Monotonic under LOCC~\cite{Lo2001}: $\rk\left(\ket{\psi}\bra{\psi} \right) \geq\rk\left(\Lambda\left(\ket{\psi}\bra{\psi} \right)\right)$ for all state $\ket{\psi}\bra{\psi} \in \P(\H^{AB})$ and  all LOCC operation $\Lambda$.
    \end{itemize}
%\end{observation}
%

Symmetric and concave functions play also a fundamental role in quantifying bipartite entanglement. 
Let us first introduce the set of these functions.
\begin{definition}[\textbf{Symmetric and concave functions}]
We define $\F$ as the set of symmetric and concave functions that also satisfy the following conditions:
\begin{equation}
\label{eq:setofF}
\F= \big\{f : \mathbb{R}^d \to [0,1]: f \ \text{is symmetric and concave, } f(1, 0, \ldots, 0) =0 \ \text{and} \ \arg\max_{u \in \mathbb{R}^d} f(u)= (1/d,  \ldots, 1/d) \big\}.
\end{equation}
\end{definition}

An interesting result is that any entanglement measure restricted to pure states can be expressed in terms of a function from $\mathcal{F}$ applied to the Schmidt vector of the pure state \cite{Vidal2000}.
More specifically,
\begin{theorem}[\textbf{Pure case entanglement measures}]
\label{th:rnt_measures_pure}	
Given an entanglement measure $E : \S(\H^{AB})  \to \R$, there is a function $f_{E} \in \F$, 
such that the restriction of $E$ to pure states, denoted as $E|_{\P(\Hab)}$, can be written as
\begin{equation}
\label{eq:f_E}
E|_{\P(\Hab)}(\ket{\psi}\bra{\psi}) = f_{E}(\mu(\ket{\psi}\bra{\psi}).
\end{equation} 
\end{theorem}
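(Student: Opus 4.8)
The plan is to define $f_E$ by transporting $E$ through the Schmidt decomposition. For a probability vector $u\in\Delta_d$ let $\ket{\psi_u}=\sum_{i=1}^{d}\sqrt{u^{\downarrow}_i}\,\ket{a_i}\ket{b_i}$ be the canonical pure state built from fixed orthonormal families of $\Ha$ and $\Hb$, and set $f_E(u):=E(\ket{\psi_u}\bra{\psi_u})$. The first task is to check that this is unambiguous, i.e.\ that $E$ takes the same value on every pure state with a given Schmidt vector. By Theorem~\ref{th:Schmidt_deco}, any two such states differ by a local unitary $U^A\otimes U^B$; since a local unitary and its inverse are both LOCC operations, condition~\ref{c2:monotonicity} used in both directions forces $E$ to be invariant under it. Hence $f_E$ is well defined and, by Eq.~\eqref{eq:vector_Schmidt_pure}, $E|_{\P(\Hab)}(\ket{\psi}\bra{\psi})=f_E\big(\mu(\ket{\psi}\bra{\psi})\big)$ for every pure state, which is the desired identity. (The membership conditions checked below are those needed on the probability simplex, where all Schmidt vectors lie, and we read the domain in the definition of $\F$ accordingly; we also take $E$ normalized so that its maximal value is $1$, as is implicit in condition~\ref{c4:normalization} and in the range appearing in $\F$.)

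Next I would dispatch the ``easy'' membership conditions. Symmetry of $f_E$ is immediate because $f_E(u)$ depends only on $u^{\downarrow}$ (equivalently, a permutation of the Schmidt coefficients is implemented by local permutation unitaries, under which $E$ is invariant). The state $\ket{a_1}\ket{b_1}$ is a product, hence separable, state, so condition~\ref{c1:nonneg} gives $f_E(1,0,\dots,0)=0$ and $f_E\ge 0$; together with the normalization of $E$ this places the range of $f_E$ in $[0,1]$. By condition~\ref{c4:normalization} the states maximising $E$ are exactly the maximally entangled pure states, whose Schmidt vector is $(1/d,\dots,1/d)$ by Theorem~\ref{th:Schmidt_deco}, so $\arg\max f_E=(1/d,\dots,1/d)$.

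The substantive step, and the one I expect to be the main obstacle, is concavity of $f_E$; I would derive it from strong monotonicity, condition~\ref{c3:strong_monotonicity}. Fix probability vectors $u_1,\dots,u_N\in\Delta_d$ and weights $p_n\ge 0$ with $\sum_n p_n=1$, and put $u=\sum_n p_n u_n\in\Delta_d$. The key auxiliary fact is that $\ket{\psi_u}$ can be transformed by LOCC into the ensemble $\{(p_n,\ket{\psi_{u_n}})\}_{n=1}^{N}$. The criterion governing such pure-state-to-ensemble conversions is the majorization relation $\mu(\ket{\psi_u})\preceq\sum_n p_n\,\mu(\ket{\psi_{u_n}})^{\downarrow}$, and it holds here because $u=\sum_n p_n u_n\preceq\sum_n p_n u^{\downarrow}_n$: the vector $\sum_n p_n u^{\downarrow}_n$ is non-increasingly ordered, and for each $k$ one has $\sum_{i\le k}\big(\sum_n p_n u_n\big)^{\downarrow}_i=\max_{|S|=k}\sum_{i\in S}\sum_n p_n (u_n)_i\le\sum_n p_n\max_{|S|=k}\sum_{i\in S}(u_n)_i=\sum_n p_n\sum_{i\le k}(u_n)^{\downarrow}_i$. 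Granting the conversion, apply condition~\ref{c3:strong_monotonicity} to the corresponding LOCC operation — whose branch outputs are the states $\ket{\psi_{u_n}}$ up to local unitaries, under which $E$ is already known to be invariant — to obtain $f_E(u)=E(\ket{\psi_u}\bra{\psi_u})\ge\sum_n p_n E(\ket{\psi_{u_n}}\bra{\psi_{u_n}})=\sum_n p_n f_E(u_n)$, i.e.\ concavity. The remaining gap is the pure-state-to-ensemble conversion lemma; I would isolate it as an auxiliary result — the known majorization criterion for LOCC conversions of a pure state into an ensemble — whose proof is a standard extension of Nielsen's theorem. Collecting well-definedness, symmetry, the boundary and normalization conditions, and concavity yields $f_E\in\F$ together with the stated identity, completing the proof.
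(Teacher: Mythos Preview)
The paper does not supply its own proof of Theorem~\ref{th:rnt_measures_pure}; it is stated in the preliminaries as a known result and attributed to Vidal~\cite{Vidal2000}. Your proposal is a correct reconstruction of Vidal's argument: define $f_E$ by pulling $E$ back through the Schmidt decomposition, use local-unitary invariance (from condition~\ref{c2:monotonicity} applied both ways) to get well-definedness and symmetry, read off the boundary and normalization conditions from conditions~\ref{c1:nonneg} and~\ref{c4:normalization}, and derive concavity from strong monotonicity combined with the Jonathan--Plenio pure-to-ensemble criterion (the result the paper cites as~\cite{Jonathan1999} and records, in its mixed-state form, as Theorem~\ref{th:plenio_th}). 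The majorization inequality $\sum_n p_n u_n\preceq\sum_n p_n u_n^{\downarrow}$ you invoke is exactly the right ingredient, and your handling of the branch outputs up to local unitaries is fine.

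Two minor caveats you already flag are genuine mismatches with the paper's definitions rather than flaws in your argument: the domain of $f\in\F$ is stated as $\mathbb{R}^d$ while your construction naturally lives on $\Delta_d$, and the codomain $[0,1]$ presupposes a normalization of $E$ that Definition~\ref{def:coherence_measures} does not impose. These are harmless once one reads the statement, as you do, with the usual conventions.
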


On the other hand, in the opposite sense, it is possible to extend an entanglement measure defined for pure states to the general case. In the literature, at least two approaches have been proposed.
One of these methods was proposed in \cite{Vidal2000} (see, also, ~\cite{Zhu2017}), whereas the other was developed more recently in the papers \cite{Yu2021,Shi2021}.

The first proposal is based on the convex roof construction (see, for example, \cite{Uhlmann2010}).
Before introducing the definition of the ``convex roof-based entanglement measure'', we define $\D(\rho)$ as the set of all pure decompositions of a the state $\rho$:

\begin{definition}[\textbf{Pure state decompositions}]
Given $\rhoab$ an arbitrary bipartite state, we define the set
$\D(\rhoab)$ as
\begin{equation}
\label{eq:set_pure_ensambles}
\D(\rhoab)= \left\{ {\left\lbrace  q_i, \ket{ \psi_i} \right\rbrace }_{i = 1}^M : \ M \in \mathbb{N},  \ \rho= \sum_{i = 1}^{M}  q_i \ket{\psi_i}\bra{\psi_i}, \ \text{with} \ \ket{\psi_i} \in \H^{AB} \right\}, 	
\end{equation}
    
\end{definition}

Entanglement measures based on the convex roof technique are defined as follows, as presented in \cite{Zhu2017}, which are equivalent to the measures in \cite{Vidal2000}.
\begin{theorem}[\textbf{Convex roof entanglement measures}]
	\label{def:E_f_convex_roof}	
For all $f \in \F$,  $E^{\roof}_{f}: \S(\H^{AB})  \to \R$ given by 
	\begin{equation}
	\label{eq:convex_roof_mixed}
	E^{\roof}_{f}(\rhoab) = \inf_{\left\lbrace  q_i, \ket{ \psi_i} \right\rbrace_{i = 1}^M  \in \D(\rhoab) } \sum_{i=1}^M q_i 	 f(\mu(\ket{\psi_i}\bra{\psi_i}))
	\end{equation}
is an entanglement measure, that is, it satisfies the conditions~\ref{c1:nonneg}--\ref{c5:convexity}.
\end{theorem}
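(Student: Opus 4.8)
The plan is to verify each of the five axioms of Definition~\ref{def:coherence_measures} for $E^{\roof}_f$ in turn, leaning on the fact that $f \in \F$ is symmetric, concave, normalized so that $f(1,0,\ldots,0)=0$ and $\arg\max f = (1/d,\ldots,1/d)$, together with the known pure-state results (Theorems~\ref{th:Schmidt_deco} and~\ref{th:rnt_measures_pure}) and the structure of LOCC protocols. Since $f \in \F$ implies $f$ is Schur-concave (Theorem on symmetry and concavity implying Schur-concavity) and $f \geq 0$ with $f = 0$ exactly on the majorization-maximal point $(1,0,\ldots,0)$, most of the work reduces to standard convex-roof arguments.

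First I would establish non-negativity and vanishing on separable states (condition~\ref{c1:nonneg}): non-negativity is immediate since $f \geq 0$ and $E^{\roof}_f$ is an infimum of nonnegative quantities; for the separable case, a separable $\rhoab$ admits a decomposition into product pure states $\ket{\psi_i} = \ket{a_i}\otimes\ket{b_i}$, each of which has Schmidt vector $(1,0,\ldots,0)$, so $f(\mu(\ket{\psi_i}\bra{\psi_i})) = 0$ and the infimum is $0$. Next, convexity (condition~\ref{c5:convexity}) follows from the standard argument that a convex roof is the largest convex function lying below the given pure-state function: given near-optimal decompositions of each $\rhoab_i$, their union weighted by $q_i$ is a valid decomposition of $\sum_i q_i \rhoab_i$, giving the inequality. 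For the maximal-value condition~\ref{c4:normalization}, I would use that $f$ attains its maximum only at $(1/d,\ldots,1/d)$; any pure state in a decomposition of $\rhoab_{\max}$ must in fact be $\rhoab_{\max}$ itself (since $\rhoab_{\max}$ is pure/extremal), so $E^{\roof}_f(\rhoab_{\max}) = f(1/d,\ldots,1/d) = \max f$, and conversely for any $\rhoab$, $E^{\roof}_f(\rhoab) \leq \max f$ with equality forcing $\rhoab = \rhoab_{\max}$.

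The substantive part is strong monotonicity under LOCC (condition~\ref{c3:strong_monotonicity}), from which ordinary monotonicity~\ref{c2:monotonicity} follows by combining the $\sigma^{AB}_n$ convexly and applying~\ref{c5:convexity}. Here I would invoke the pure-state behavior of the Schmidt vector under LOCC: for a pure state, $f(\mu(\cdot))$ is strongly monotone under LOCC, which is exactly the content of Theorem~\ref{th:rnt_measures_pure} applied to a known entanglement measure realizing $f$ (or can be derived directly from Nielsen's theorem and Schur-concavity of $f$, since the average Schmidt vector after a local measurement majorizes the original). Then, for mixed $\rhoab$, take a near-optimal pure decomposition $\{q_i,\ket{\psi_i}\}$; applying the LOCC map $\Lambda$ with Kraus operators $\{K_n\}$ to each $\ket{\psi_i}$ produces pure states $\ket{\psi_{i,n}} \propto K_n\ket{\psi_i}$ with weights $q_i p_{i|n}$; these assemble into a valid decomposition of each output $\sigma^{AB}_n$, and the pure-state strong monotonicity applied termwise plus rearranging the double sum yields $\sum_n p_n E^{\roof}_f(\sigma^{AB}_n) \leq \sum_i q_i f(\mu(\ket{\psi_i}\bra{\psi_i}))$; taking the infimum over decompositions of $\rhoab$ gives the claim. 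I expect the main obstacle to be the bookkeeping in this last step: carefully tracking how a decomposition of $\rhoab$ pushes forward under a branching LOCC protocol to simultaneously yield decompositions of all the $\sigma^{AB}_n$, and ensuring the pure-state strong monotonicity is stated in a form strong enough (genuinely per-branch, not merely on average) to be summed. This is essentially the classical Vidal-type argument, so I would cite~\cite{Vidal2000,Uhlmann2010} for the template and emphasize only the adaptation to the function $f$ acting on the Schmidt vector.
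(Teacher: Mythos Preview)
The paper does not actually prove this theorem: it is stated in Section~\ref{sec:entanglement} as a known preliminary result, attributed to~\cite{Vidal2000,Zhu2017}, and no proof is given anywhere in the text or appendices. So there is no ``paper's own proof'' to compare against.

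That said, your plan is the standard Vidal-type convex-roof argument and is essentially correct. The only places I would tighten are: (a) in condition~\ref{c4:normalization}, the reverse implication (that $E^{\roof}_f(\rhoab)=\max f$ forces $\rhoab$ to be maximally entangled) requires the observation that \emph{every} pure-state decomposition of such a $\rhoab$ must consist entirely of maximally entangled states, and then an argument that no genuinely mixed state can have this property---exactly the content of the paper's proof of property~\ref{c4:maxent_sch} in Theorem~\ref{th:Schmidt_properties}, which you could simply cite; and (b) in the strong-monotonicity step, the per-branch pure-state inequality you need is $f(\mu(\ket{\psi_i}\bra{\psi_i})) \geq \sum_n p_{n|i}\, f(\mu(\ket{\psi_{i,n}}\bra{\psi_{i,n}}))$, which follows from Theorem~\ref{th:plenio_th} (the majorization $\mu(\ket{\psi_i}\bra{\psi_i}) \preceq \sum_n p_{n|i}\, \mu^\downarrow(\ket{\psi_{i,n}}\bra{\psi_{i,n}})$) combined with Schur-concavity and then concavity of $f$. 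With those two clarifications the argument goes through as you describe.
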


It is interesting to note that $E^{\roof}_{f}(\rhoab)$ is the largest of all entanglement measures, i.e.
\begin{equation}
    E^{\roof}_{f}(\rhoab) \geq E(\rhoab),
\end{equation}
for all $E$ entanglement measure.

As we mentioned before, the convex roof technique is not the only way to extend entanglement monotones from pure to mixed states. An alternative construction was recently proposed \cite{Yu2021,Shi2021}.
Before introducing this entanglement monotone, we need to define the set of all pure bipartite states that can be converted to a state $\rhoab$ by LOCC operations,

\begin{definition}[\textbf{Set of convertible pure  states}]
Given $\rhoab$ an arbitrary bipartite state, we define the set $\O(\rhoab)$ as
\begin{equation}
\label{eq:set_pure_to_rho}
\O(\rhoab)= \left\{ \ket{\psi}\bra{\psi} : \ket{\psi}\bra{\psi} \locc \rhoab \right\}.
\end{equation}
\end{definition}

We will call the following entanglement monotones \cite{Yu2021,Shi2021}, based on the previous set, top entanglement monotones.

\begin{theorem}[\textbf{Top entanglement monotone}]
	\label{def:E_f_convex_roof}	
For all $f \in \F$,  $E^{\top}_{f}: \S(\H^{AB})  \to \R$ given by
	\begin{equation}
	\label{eq:convertible_mixed}
	E^{\top}_{f}(\rhoab) =  \inf_{\ket{\psi}\bra{\psi} \in \O(\rhoab) } f(\mu(\ket{\psi}\bra{\psi}))
	\end{equation}
is an entanglement monotone, i.e. it satisfies the conditions~\ref{c1:nonneg}--\ref{c4:normalization}.
\end{theorem}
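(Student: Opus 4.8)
The plan is to verify the four entanglement-monotone conditions \ref{c1:nonneg}--\ref{c4:normalization} for $E^{\top}_{f}$ one at a time, leaning on the corresponding properties of $f \in \F$ and on the structure of the set $\O(\rhoab)$ of pure states LOCC-convertible to $\rhoab$. First I would dispose of non-negativity: since $f$ maps into $[0,1]$, the infimum in \eqref{eq:convertible_mixed} is non-negative, so $E^{\top}_{f}(\rhoab)\geq 0$. For the vanishing on separable states, the key observation is that if $\rhoab$ is separable then every product pure state $\ket{a}\ket{b}$ can be converted to $\rhoab$ by LOCC (prepare locally the classical mixture in \eqref{eq:sep_def}), so $\O(\rhoab)$ contains a state with Schmidt vector $(1,0,\ldots,0)$; since $f(1,0,\ldots,0)=0$ and $f\geq 0$, the infimum is $0$. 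Conversely, if $\rhoab$ is entangled one must argue $E^{\top}_{f}(\rhoab)>0$; here I would use that $\O(\rhoab)$ cannot contain any product state, because LOCC maps separable states to separable states, hence every $\ket{\psi}\in\O(\rhoab)$ has Schmidt rank $\geq 2$, and strict concavity/symmetry of $f$ (or at least $f$ being nonzero off the product vectors) forces a positive value — though for a generic $f\in\F$ this last step needs care, and may in fact require $f$ to be \emph{strictly} concave, which I would flag.

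Next, monotonicity under LOCC. Let $\Lambda$ be an LOCC operation and set $\sigmaab = \Lambda(\rhoab)$. The crucial set-theoretic fact is $\O(\rhoab) \subseteq \O(\sigmaab)$: if $\ket{\psi}\bra{\psi} \locc \rhoab$, then composing with $\Lambda$ gives $\ket{\psi}\bra{\psi} \locc \sigmaab$, since a composition of LOCC maps is LOCC. Taking the infimum of $f(\mu(\cdot))$ over a larger set can only decrease it, so $E^{\top}_{f}(\rhoab) = \inf_{\O(\rhoab)} \geq \inf_{\O(\sigmaab)} = E^{\top}_{f}(\sigmaab)$. This is the cleanest of the arguments and exploits the ``top'' (supremum-like) construction directly.

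For strong monotonicity, write the LOCC operation via Kraus operators $\{K_n\}$ with outcomes $\sigma^{AB}_n = K_n\rhoab K^\dag_n/p_n$, $p_n = \Tr K_n\rhoab K^\dag_n$. I would take an arbitrary $\ket{\psi}\bra{\psi}\in\O(\rhoab)$ and push it through the same instrument: applying $\{K_n\}$ to $\ket{\psi}\bra{\psi}$ produces pure outcomes $\ket{\phi_n}\bra{\phi_n}$ with probabilities $r_n$, and by the contractivity of LOCC one can arrange (possibly using that $\ket{\psi}\locc\rhoab$ can be refined so that the branch structure is compatible) that $\ket{\phi_n}\bra{\phi_n} \locc \sigma^{AB}_n$, i.e.\ $\ket{\phi_n}\bra{\phi_n}\in\O(\sigma^{AB}_n)$. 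Strong monotonicity of $f$ on pure states (Theorem~\ref{th:rnt_measures_pure} gives that $f(\mu(\cdot))$ is an entanglement measure on pure states, hence strongly monotone) yields $f(\mu(\ket{\psi}\bra{\psi})) \geq \sum_n r_n f(\mu(\ket{\phi_n}\bra{\phi_n})) \geq \sum_n r_n E^{\top}_{f}(\sigma^{AB}_n)$. The delicate point is relating the weights $r_n$ (from applying the instrument to $\ket{\psi}$) to the weights $p_n$ (from applying it to $\rhoab$); I expect to resolve this by first choosing $\ket{\psi}$ near-optimal for $E^{\top}_{f}(\rhoab)$ and using a result from \cite{Yu2021,Shi2021} that the convertibility $\ket{\psi}\locc\rhoab$ can be taken to respect the measurement branches, after which one takes $\inf$ over $\ket{\psi}$ and an $\varepsilon\to 0$ limit. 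This branch-matching step is the main obstacle.

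Finally, the normalization condition \ref{c4:normalization}. If $\rhoab = \ket{\psi_{\max}}\bra{\psi_{\max}}$, then $\mathbb{1}(\cdot)$-type reasoning shows $\O(\rhoab)$ consists precisely of pure states whose Schmidt vector majorizes $(1/d,\ldots,1/d)$ — but every probability vector in $\Delta^\downarrow_d$ majorizes the uniform vector, so actually $\O(\ket{\psi_{\max}}\bra{\psi_{\max}})$ contains only states with Schmidt vector equal to $(1/d,\ldots,1/d)$ up to the ambient dimension, giving $E^{\top}_{f} = f(1/d,\ldots,1/d) = \max f$. For a general $\rhoab$, any $\ket{\psi}\in\O(\rhoab)$ satisfies $f(\mu(\ket{\psi}\bra{\psi}))\leq \max f = f(1/d,\ldots,1/d)$, so $E^{\top}_{f}(\rhoab) \leq f(1/d,\ldots,1/d)$ trivially, and one must check the maximizers are \emph{exactly} the maximally entangled pure states: if $\rhoab$ is mixed or not maximally entangled, exhibit some $\ket{\psi}\in\O(\rhoab)$ with non-uniform Schmidt vector (e.g.\ $\rhoab$ itself if pure, or a pure state in a suitable decomposition), making the infimum strictly smaller. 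Assembling these four verifications completes the proof; I would present monotonicity first as a warm-up, then normalization, then non-negativity/separability, and close with the strong-monotonicity argument where the branch-matching lemma does the real work.
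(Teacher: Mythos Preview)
The paper does not actually prove this theorem: it is stated in the preliminaries as a known result and attributed to \cite{Yu2021,Shi2021}, so there is no in-paper argument to compare against. Your outline is therefore being judged on its own merits.

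Conditions~\ref{c1:nonneg}, \ref{c2:monotonicity}, and \ref{c4:normalization} are handled correctly. One remark on \ref{c1:nonneg}: the paper's condition only asks that $E$ vanish on separable states, not the converse, so your flag about needing strict concavity is unnecessary here. Your argument for \ref{c2:monotonicity} via $\O(\rhoab)\subseteq\O(\Lambda(\rhoab))$ is exactly the right one-liner.

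The real issue is in your treatment of strong monotonicity~\ref{c3:strong_monotonicity}. Your worry that the outcome probabilities $r_n$ (from applying the instrument to $\ket\psi$) differ from the $p_n$ (from applying it to $\rhoab$) stems from a wrong picture: you should not apply $\{K_n\}$ directly to $\ket\psi$. Instead, take the deterministic LOCC $\Lambda'$ realizing $\ket\psi\bra\psi\locc\rhoab$ and \emph{compose} it with the selective instrument $\{K_n\}$. The composite is a selective LOCC on $\ket\psi\bra\psi$ whose coarse outcomes occur with probabilities exactly $p_n=\Tr(K_n\Lambda'(\ket\psi\bra\psi)K_n^\dag)$ and whose post-selected states are exactly $\sigma^{AB}_n$. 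Refining the composite into fine Kraus operators $\{L_{n,m}\}$ (grouped by $n$), each $L_{n,m}\ket\psi$ is pure, and $\{q_{n,m}/p_n,\ket{\chi_{n,m}}\}_m$ is a pure-state decomposition of $\sigma^{AB}_n$. Set $w_n=\sum_m (q_{n,m}/p_n)\,\mu^\downarrow(\ket{\chi_{n,m}}\bra{\chi_{n,m}})$; by Theorem~\ref{th:plenio_th} the pure state with Schmidt vector $w_n$ lies in $\O(\sigma^{AB}_n)$, so $f(w_n)\geq E^{\top}_f(\sigma^{AB}_n)$. Jonathan--Plenio gives $\mu(\ket\psi\bra\psi)\preceq\sum_n p_n w_n$, and then Schur-concavity followed by concavity of $f$ yields
\[
f(\mu(\ket\psi\bra\psi))\;\geq\; f\Big(\sum_n p_n w_n\Big)\;\geq\;\sum_n p_n f(w_n)\;\geq\;\sum_n p_n\,E^{\top}_f(\sigma^{AB}_n).
\]
Taking the infimum over $\ket\psi\in\O(\rhoab)$ finishes. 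No ``branch-matching lemma'' from \cite{Yu2021,Shi2021} is needed; the matching is automatic once you compose the maps in the right order.
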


It is important to note that the function $E^{\top}_{f}(\rhoab)$ in general is not convex.
The expression top in its name comes from the fact that it is the largest among all  entanglement monotone, which can be expressed as follows \cite{Yu2021}:
\begin{equation}
    E^{\top}_{f}(\rhoab) \geq E(\rhoab),
\end{equation}
for all entanglement monotone $E$.

We now review some results on entanglement transformations that will be useful for the following section and are based on a majorization relation between Schmidt vectors. 
Let us begin with Nielsen theorem  \cite{Nielsen1999}, which characterizes pure-to-pure state LOCC conversions.

\begin{theorem}[\textbf{Pure-to-pure states LOCC conversions}]
	\label{th:nielsen_th}	
	Let $\ket{\psi}\bra{\psi}$ and $\ket{\phi}\bra{\phi}$ be two arbitrary bipartite pure states.
        Then,
	\begin{equation}
	\label{eq:nielsen_th}
	\ket{\psi}\bra{\psi} \locc \ket{\phi}\bra{\phi} \iff \mu(\ket{\psi}\bra{\psi}) \preceq \mu(\ket{\phi}\bra{\phi}).
	\end{equation}
	%
 %with $\preceq$ the majorization relation.
\end{theorem}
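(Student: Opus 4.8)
The plan is to prove the two implications separately: the ``only if'' direction rests on a majorization inequality for sums of Hermitian matrices, while the ``if'' direction uses the classical characterisation of majorization by doubly stochastic matrices together with Birkhoff's theorem to exhibit an explicit protocol. Throughout write $\rho_\psi = \tr_B(\ket{\psi}\bra{\psi})$ and $\rho_\phi = \tr_B(\ket{\phi}\bra{\phi})$, so that by the reduced-state characterisation of the Schmidt vector one has $\eig^\downarrow(\rho_\psi) = \mu(\ket{\psi}\bra{\psi})^\downarrow$ and $\eig^\downarrow(\rho_\phi) = \mu(\ket{\phi}\bra{\phi})^\downarrow$, and recall that $u\preceq v$ depends only on $u^\downarrow$ and $v^\downarrow$.

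\emph{($\Rightarrow$).} Suppose $\ket{\psi}\bra{\psi}\locc\ket{\phi}\bra{\phi}$. First I would invoke the standard structural reduction for LOCC acting on a pure bipartite state \cite{Nielsen1999,Lo2001}: any deterministic pure-to-pure transformation can be realised by a one-way protocol in which Alice applies a single generalised measurement $\{A_k\}$ on $\Ha$ with $\sum_k A_k^\dagger A_k = I$, broadcasts the outcome $k$, and Bob applies a correcting unitary $V_k$ on $\Hb$, so that $(A_k\otimes V_k)\ket{\psi} = \sqrt{p_k}\,\ket{\phi}$ with $p_k = \Tr\!\big(A_k\rho_\psi A_k^\dagger\big)$ for every $k$. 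Tracing out $B$ yields $A_k\rho_\psi A_k^\dagger = p_k\,\rho_\phi$. Now set $B_k := A_k\rho_\psi^{1/2}$; then $\sum_k B_k^\dagger B_k = \rho_\psi^{1/2}\big(\sum_k A_k^\dagger A_k\big)\rho_\psi^{1/2} = \rho_\psi$, while $B_kB_k^\dagger = A_k\rho_\psi A_k^\dagger = p_k\,\rho_\phi$. Since $B_k^\dagger B_k$ and $B_kB_k^\dagger$ have the same non-zero spectrum, $\eig^\downarrow(B_k^\dagger B_k) = p_k\,\mu(\ket{\phi}\bra{\phi})^\downarrow$ (zero-padded to length $d^A$). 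Applying the inequality $\eig^\downarrow\!\big(\sum_k C_k\big)\preceq\sum_k\eig^\downarrow(C_k)$, valid for Hermitian $C_k$ --- e.g.\ via Ky Fan's principle $\sum_{i\le m}\eig^\downarrow_i(C) = \max_{\dim W = m}\Tr(P_W C)$, see \cite{MarshallBook} --- to $C_k = B_k^\dagger B_k$ and using $\sum_k p_k = 1$, one obtains $\mu(\ket{\psi}\bra{\psi})^\downarrow = \eig^\downarrow(\rho_\psi)\preceq\sum_k p_k\,\mu(\ket{\phi}\bra{\phi})^\downarrow = \mu(\ket{\phi}\bra{\phi})^\downarrow$, that is, $\mu(\ket{\psi}\bra{\psi})\preceq\mu(\ket{\phi}\bra{\phi})$.

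\emph{($\Leftarrow$).} Suppose $\mu(\ket{\psi}\bra{\psi})\preceq\mu(\ket{\phi}\bra{\phi})$. By the Hardy--Littlewood--P\'olya theorem \cite{MarshallBook} there is a doubly stochastic matrix $D$ with $\mu(\ket{\psi}\bra{\psi}) = D\,\mu(\ket{\phi}\bra{\phi})$, and by Birkhoff's theorem $D = \sum_j q_j P_{\pi_j}$ with $q_j\ge 0$, $\sum_j q_j = 1$ and $P_{\pi_j}$ permutation matrices; with a suitable labelling this reads $\psi_i = \sum_j q_j\,\phi_{\pi_j(i)}$. As the target is only required up to local unitaries, take $\ket{\psi} = \sum_i\sqrt{\psi_i}\,\ket{a_i}\ket{b_i}$ and $\ket{\phi} = \sum_i\sqrt{\phi_i}\,\ket{a_i}\ket{b_i}$ in a common product basis; assuming first all $\psi_i>0$, define Alice's Kraus operators by $A_j\ket{a_i} = \sqrt{q_j}\,\sqrt{\phi_{\pi_j(i)}/\psi_i}\;\ket{a_{\pi_j(i)}}$. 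Then $A_j^\dagger A_j$ is diagonal with entries $q_j\,\phi_{\pi_j(i)}/\psi_i$, so $\sum_j A_j^\dagger A_j = I$; and $(A_j\otimes I)\ket{\psi} = \sqrt{q_j}\sum_k\sqrt{\phi_k}\,\ket{a_k}\ket{b_{\pi_j^{-1}(k)}}$, which is $\sqrt{q_j}$ times a state local-unitarily equivalent to $\ket{\phi}$ --- Bob applies the permutation unitary $\ket{b_{\pi_j^{-1}(k)}}\mapsto\ket{b_k}$ --- occurring with probability $p_j = q_j$. Hence ``Alice measures $\{A_j\}$, broadcasts $j$, Bob applies the corresponding permutation'' is an LOCC protocol deterministically producing $\ket{\phi}\bra{\phi}$. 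If some $\psi_i$ vanish, the same construction is carried out on $\supp(\rho_\psi)$ (completed to a full measurement by an extra Kraus operator supported on the orthogonal complement, which occurs with probability zero); majorization forces $\#\supp\mu(\ket{\phi}\bra{\phi})\le\#\supp\mu(\ket{\psi}\bra{\psi})$, so $\ket{\phi}$ is supported on that subspace and the argument is unchanged.

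\emph{Main obstacle.} The linear algebra in both directions is essentially bookkeeping; the genuinely delicate ingredient is the structural reduction invoked at the start of ($\Rightarrow$) --- that an arbitrary multi-round, two-way LOCC transformation between \emph{pure} states can be collapsed to a single Alice measurement followed by Bob's conditional unitaries. I would establish it by induction on the number of communication rounds, exploiting that (i) the conditional states after any local measurement on a pure state remain pure, (ii) via the Schmidt decomposition a local measurement performed by one party can be mirrored on the other party's side up to a local correction, and (iii) local unitaries can be commuted to the end of the protocol and absorbed into the final correction.
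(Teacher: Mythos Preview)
The paper does not prove this theorem; it is stated in the preliminaries (Section~\ref{sec:entanglement}) as a known result with attribution to Nielsen~\cite{Nielsen1999}, and is only \emph{used} later (e.g.\ implicitly through Theorem~\ref{th:plenio_th}). Your proposal is a correct, essentially self-contained reconstruction of the standard Nielsen argument: the forward direction via the Lo--Popescu reduction to a one-way protocol together with the Ky~Fan eigenvalue inequality for sums of Hermitian matrices, and the converse via Hardy--Littlewood--P\'olya/Birkhoff and an explicit Kraus-operator construction. Since there is no in-paper proof to compare against, nothing further needs to be contrasted; your treatment matches the original literature proof and the handling of zero Schmidt coefficients in the $(\Leftarrow)$ direction is the right way to close the minor edge case.
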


This result has been generalized to cases where the target state is an ensemble of pure states~\cite{Jonathan1999} and more recently to cases where the final state is mixed~\cite{Zanoni2024}.
\begin{theorem}[\textbf{Pure-to-mixed states LOCC conversions}]
	\label{th:plenio_th}	
     Let $\ket{\psi}\bra{\psi}$ be an arbitrary bipartite pure state and $\sigma^{AB}$ be an arbitrary bipartite state.
     Then,
	\begin{equation}
	\label{eq:plenio_th}
        \begin{split}
	\ket{\psi}\bra{\psi} \locc \sigma^{AB}  &\iff \exists \{  p_n, \ket{\phi_{n}}\}_{n= 1}^N \ \ \text{such that} \\
	   &\begin{array}{l}
		(1) \ \sigma^{AB}= \sum_{n = 1}^{N} p_n \ket{\phi_{n}}\bra{\phi_{n}} \ \text{and} \\
		(2) \ \mu(\ket{\psi}\bra{\psi})  \preceq \sum_{n = 1}^{N} p_{n} \mu^\downarrow(\ket{\phi_{n}}\bra{\phi_{n}}).
	   \end{array}
	\end{split}
	\end{equation}
	%
 %with $\preceq$ the majorization relation.
 \label{Jonathan and Pleino}
\end{theorem}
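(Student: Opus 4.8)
My plan is to derive this statement from the ensemble version of Nielsen's theorem due to Jonathan and Plenio~\cite{Jonathan1999}, which I will use as a black box: a bipartite pure state $\ket{\psi}\bra{\psi}$ can be transformed by LOCC into the \emph{ensemble} $\{p_n,\ket{\phi_n}\}_{n=1}^N$ --- meaning there is an LOCC protocol whose complete classical record takes the value $n$ with probability $p_n$, in which case the parties share $\ket{\phi_n}\bra{\phi_n}$ --- if and only if $\mu(\ket{\psi}\bra{\psi}) \preceq \sum_{n=1}^N p_n\,\mu^\downarrow(\ket{\phi_n}\bra{\phi_n})$. Both implications of Theorem~\ref{th:plenio_th} then become short bridges between this ensemble picture and the mixed state $\sigmaab$ obtained by ``forgetting'' the ensemble label.

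For the implication ($\Leftarrow$), suppose $\{p_n,\ket{\phi_n}\}_{n=1}^N$ satisfies (1) and (2). By (2) and the Jonathan--Plenio theorem there is an LOCC protocol converting $\ket{\psi}\bra{\psi}$ into the ensemble $\{p_n,\ket{\phi_n}\}$; appending to it the (trivially LOCC) step of erasing the classical register that records $n$ produces an LOCC protocol whose output state is $\sum_{n=1}^N p_n\ket{\phi_n}\bra{\phi_n}$, which equals $\sigmaab$ by~(1). Hence $\ket{\psi}\bra{\psi}\locc\sigmaab$.

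For the implication ($\Rightarrow$), let $\Lambda$ be an LOCC protocol with $\Lambda(\ket{\psi}\bra{\psi})=\sigmaab$. First I would unravel $\Lambda$ keeping the complete classical record of all local measurement outcomes (using Naimark dilation so that all measurements are projective on locally appended ancillas) and deferring every ``discard'' (partial trace) to the very end; since each party acts only locally, all ancillas factor as $C=C_A\otimes C_B$, and for a pure input the branch labelled by a full record $m$ is a pure state $\ket{\Xi_m}$ on $\Hab\otimes C$ occurring with some probability $\tilde p_m$, with $\sigmaab=\sum_m \tilde p_m\,\tr_C\ket{\Xi_m}\bra{\Xi_m}$. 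The key step is then to replace the final $\tr_C$ by a local projective measurement of $C_A$ and $C_B$ in fixed product bases whose outcomes are recorded but never used: this is still LOCC, each refined branch $(m,k)$ now carries a pure state $\ket{\phi_{m,k}}\bra{\phi_{m,k}}$ on $\Hab$ with probability $\tilde p_m r_{m,k}$, and $\sum_k r_{m,k}\ket{\phi_{m,k}}\bra{\phi_{m,k}}=\tr_C\ket{\Xi_m}\bra{\Xi_m}$, so that $\sum_{m,k}\tilde p_m r_{m,k}\ket{\phi_{m,k}}\bra{\phi_{m,k}}=\sigmaab$. Thus $\ket{\psi}\bra{\psi}$ is converted by LOCC into the pure-state ensemble $\{\tilde p_m r_{m,k},\ket{\phi_{m,k}}\}$, and the Jonathan--Plenio theorem yields $\mu(\ket{\psi}\bra{\psi})\preceq\sum_{m,k}\tilde p_m r_{m,k}\,\mu^\downarrow(\ket{\phi_{m,k}}\bra{\phi_{m,k}})$. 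Relabelling the pair $(m,k)$ by a single index $n$ gives the decomposition satisfying (1) and (2).

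The routine parts are the bookkeeping in ($\Leftarrow$) and the invocation of~\cite{Jonathan1999}. The main obstacle is the ($\Rightarrow$) direction: one must take seriously that a general LOCC map acting on a pure state can produce \emph{mixed} states along its individual branches, because of locally discarded ancillas, and argue --- via the ``measure-and-record instead of discard'' device above, which preserves both the LOCC property and the overall channel --- that without loss of generality every branch may be taken pure, so that the ensemble form of Nielsen's theorem applies.
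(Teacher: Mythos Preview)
The paper does not prove Theorem~\ref{th:plenio_th}; it is stated in the Preliminaries as a known result, with the ensemble version attributed to Jonathan and Plenio~\cite{Jonathan1999} and the mixed-state form to Zanoni, Theurer and Gour~\cite{Zanoni2024}. So there is no paper proof to compare against.

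That said, your argument is correct and is essentially the standard derivation. The $(\Leftarrow)$ direction is immediate from~\cite{Jonathan1999} plus a final classical erasure, exactly as you say. For $(\Rightarrow)$ you have identified the genuine issue --- that a raw LOCC branch on a pure input need not be pure because of discarded local ancillas --- and resolved it in the standard way: defer all discards, then replace the final partial trace over local ancillas by local projective measurements whose outcomes are recorded but unused. This keeps the protocol LOCC, refines each branch to a pure state on $\Hab$, and reproduces $\sigmaab$ as the average, after which the Jonathan--Plenio criterion applies directly. This is precisely the mechanism behind the result quoted from~\cite{Zanoni2024}, so your proof is in line with the literature the paper invokes.
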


Finally, we recall that the Schmidt rank can be generalized to mixed states using the convex roof technique, resulting in what is known as the Schmidt number~\cite{Terhal2000}.
\begin{definition}[\textbf{Schmidt number}]
A bipartite density matrix $\rhoab \in \S(\H^{AB})$ has Schmidt number $\N(\rhoab)=n$ if 
 \begin{enumerate}
      \item For any pure state decomposition $\{q_i, \ket{\psi_i}  \}_{i = 1}^M$ ($M \in \mathbb{N}$) of $\rhoab$, there exists a pure state $\ket{\psi_{i^*}}$ such that $\rk \left(\ket{\psi_{i^*}}\bra{\psi_{i^*}} \right) \geq n$.

     \item There exists a pure state decomposition $\{q_i, \ket{\psi_i}\}_{i = 1}^M$ ($M \in \mathbb{N}$) of $\rhoab$ 
     such that $\rk \left(\ket{\psi_{i}}\bra{\psi_{i}} \right) \leq n$ for all $\ket{\psi_{i}}$ of the pure state decomposition. 
\end{enumerate}
\end{definition}
Separable states have Schmidt number equal to $1$ whereas entangled states have larger than $1$. Furthermore, the Schmidt number is a non-additive quantity that cannot increase under LOCC and it is invariant under local operations. 

Alternatively, the Schmidt number can be expressed as~\cite{Sanpera2001} 
\begin{equation}
    \N(\rhoab) = \min_{\left\lbrace  q_i, \ket{ \psi_i} \right\rbrace_{i = 1}^M  \in \D(\rhoab) } \max_i \rk(\ket{ \psi_i}\bra{ \psi_i}).
\end{equation}

\section{Results}
\label{sec:results}

\subsection{Schmidt vector of a density matrix: Definition and properties}
 \label{sec:schimdt_vector}

%The proposal aims to explore the feasibility of extending the concept of the Schmidt vector from pure states to mixed states,

We propose an extension of the Schmidt vector from pure states to mixed states, applying the techniques developed in our previous work \cite{Bosyk2021} on quantum coherence. 
Two extensions are considered.

The first extension is inspired by the convex roof method~\cite{Uhlmann2010}.
We assign a probability vector to 
each pure state decomposition of $\rhoab$, and we consider the set of all such vectors, $\Ud(\rhoab)$.

%For every pure state decomposition of $\rhoab$, we assign it a probability vector and collect all such vectors into the ensuing set $\Ud(\rhoab)$.

\begin{definition}[\textbf{Set $\Ud$}]
For any bipartite state $\rhoab$, we define the set
 \begin{equation}
     \Ud(\rhoab) = \left\{ \sum_{i = 1}^{M}  q_i \mu^\downarrow(\ket{\psi_i}\bra{\psi_i}) : \ M \in \mathbb{N}, \  \{q_i, \ket{\psi_i}\}_{i= 1}^M  \in \D(\rhoab) \right\}.
 \end{equation}
 
\end{definition}

Since $\Ud(\rhoab) \subseteq \psetord$, taking the supremum allows us to derive a probability vector associated with the state $\rhoab$, irrespective of its specific pure state decomposition.

\begin{definition}[\textbf{Schmidt vector: first extension}]
\label{def:Schmidt vector_D}
For any bipartite state $\rhoab$, we define the vector $\nu_{\D}(\rhoab)$ as follows
\begin{equation}
    \nu_{\D}(\rhoab) =  \bigvee \Ud(\rhoab), 
\end{equation}
where $\bigvee$ is the supremum given by the majorization relation.  
\end{definition}

The second extension is motivated by the definition of a top monotone \cite{Yu2021,Shi2021}. 
For each pure state that can be transformed into $\rhoab$ via LOCC, we assign a Schmidt vector and gather all such vectors into the resulting set $\Uo(\rhoab)$.

\begin{definition}[\textbf{Set $\Uo$}]
For any bipartite state $\rhoab$, we define the set
 \begin{equation}
     \Uo(\rhoab) =\left\{ \mu^\downarrow(\ket{\psi}\bra{\psi}) : \ket{\psi}\bra{\psi} \in \O(\rhoab)  \right\}.
 \end{equation}
 
\end{definition}

Similarly, as $\Uo(\rhoab)$ lies within $\psetord$, determining the supremum allows us to derive a probability vector associated with the state $\rhoab$, independently of the pure states of the set $\O(\rhoab)$.

\begin{definition}[\textbf{Schmidt vector: second extension}]
\label{def:Schmidt vector_O}
For any bipartite state $\rhoab$, we define the vector $\nu_{O}(\rhoab)$ as follows
\begin{equation}
    \nu_{\O}(\rhoab) = \bigvee \Uo(\rhoab), 
\end{equation}
where $\bigvee$ is the supremum given by the majorization relation. 

\end{definition}
In what follows, we show that $\nu_{\O}$ and $\nu_{\D}$ are equal. 
\begin{theorem}[\textbf{Equivalence of both extensions}]
\label{th:equivalence_extensions}
    For any bipartite state $\rhoab$ we have:
    \begin{equation}
    \label{Equiv}
           \nu_{\D}(\rhoab) =   \nu_{\O}(\rhoab).
    \end{equation}
\end{theorem}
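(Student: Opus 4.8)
The plan is to prove the equality by showing that the two sets $\Ud(\rhoab)$ and $\Uo(\rhoab)$ have the same set of upper bounds in the majorization lattice $\braket{\psetord,\preceq}$; since the supremum is by definition the least upper bound, coinciding upper-bound sets force $\bigvee \Ud(\rhoab) = \bigvee \Uo(\rhoab)$. Equivalently, and perhaps more transparently, I would establish the two ``cofinality'' inclusions: (i) for every $u \in \Uo(\rhoab)$ there exists $w \in \Ud(\rhoab)$ with $u \preceq w$, and (ii) for every $w \in \Ud(\rhoab)$ there exists $u \in \Uo(\rhoab)$ with $w \preceq u$. Either inclusion alone gives one direction of the order between the suprema (a set that is dominated pointwise-up-to-majorization by another has a smaller-or-equal supremum), and the two together yield equality.

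For direction (i), take $u = \mu^\downarrow(\ket{\psi}\bra{\psi})$ with $\ket{\psi}\bra{\psi}\in\O(\rhoab)$, i.e. $\ket{\psi}\bra{\psi}\locc\rhoab$. By Theorem~\ref{th:plenio_th} (the Jonathan--Plenio / Zanoni pure-to-mixed characterization), there is a decomposition $\rhoab = \sum_n p_n\ket{\phi_n}\bra{\phi_n}$ with $\mu(\ket{\psi}\bra{\psi}) \preceq \sum_n p_n \mu^\downarrow(\ket{\phi_n}\bra{\phi_n})$. The right-hand side is precisely an element $w$ of $\Ud(\rhoab)$ (note the decomposition $\{p_n,\ket{\phi_n}\}\in\D(\rhoab)$), so $u \preceq w$ as required. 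For direction (ii), take $w = \sum_i q_i \mu^\downarrow(\ket{\psi_i}\bra{\psi_i})$ for some decomposition $\{q_i,\ket{\psi_i}\}\in\D(\rhoab)$. Here I would invoke a converse-type statement: given such a decomposition, one can always find a \emph{single} pure state $\ket{\chi}$ with $\mu(\ket{\chi}\bra{\chi}) = w^\downarrow$ (or at least $w \preceq \mu^\downarrow(\ket{\chi}\bra{\chi})$, which suffices) that can be LOCC-converted to $\rhoab$. Concretely, a pure state whose Schmidt vector equals $\sum_i q_i \mu^\downarrow(\ket{\psi_i}\bra{\psi_i})$ can be prepared and then, by a local measurement with Kraus operators implementing the mixture $\{q_i,\ket{\psi_i}\}$ (each $\ket{\psi_i}$ being majorized-below-or-equal by $w$, hence reachable from $\ket{\chi}$ by Nielsen's theorem, Theorem~\ref{th:nielsen_th}, in a coherent/controlled fashion), produces $\rhoab$; this is again just the ``if'' part of Theorem~\ref{th:plenio_th} read in reverse. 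Thus $\ket{\chi}\bra{\chi}\in\O(\rhoab)$ and $w \preceq \mu^\downarrow(\ket{\chi}\bra{\chi}) =: u \in \Uo(\rhoab)$.

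I would then close the argument with the order-theoretic lemma: if $\U,\V\subseteq\psetord$ satisfy ``for every $u\in\U$ there is $v\in\V$ with $u\preceq v$,'' then $\bigvee\U \preceq \bigvee\V$ — immediate since $\bigvee\V$ is an upper bound for $\U$. Applying this with the roles of $\Ud$ and $\Uo$ swapped via (i) and (ii) gives $\nu_{\O}(\rhoab)\preceq\nu_{\D}(\rhoab)$ and $\nu_{\D}(\rhoab)\preceq\nu_{\O}(\rhoab)$; antisymmetry of $\preceq$ on $\psetord$ then yields \eqref{Equiv}. Alternatively, one can avoid explicit reference to the third vector $u$ and simply note that $\Uo(\rhoab)$ and $\Ud(\rhoab)$ are mutually cofinal under $\preceq$.

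The main obstacle I anticipate is direction (ii): turning an arbitrary pure-state decomposition $\{q_i,\ket{\psi_i}\}$ of $\rhoab$ into a \emph{single} convertible pure state whose Schmidt vector dominates $w = \sum_i q_i\mu^\downarrow(\ket{\psi_i}\bra{\psi_i})$. One must be careful that $w\in\psetord$ indeed arises as (or is majorized by) the Schmidt vector of some pure state in the right Hilbert space — dimension bookkeeping matters, since $\dim\Hab$ bounds $d=\min\{d^A,d^B\}$ and one needs $\#\supp w \le d$, which holds because each $\mu^\downarrow(\ket{\psi_i}\bra{\psi_i})$ has support of size at most $d$ and the convex combination cannot enlarge it. The cleanest route is to show the existence of a pure state $\ket{\chi}$ with exactly $\mu(\ket{\chi}\bra{\chi})=w^\downarrow$ and then apply the ``$\Leftarrow$'' implication of Theorem~\ref{th:plenio_th} with the decomposition $\{q_i,\ket{\psi_i}\}$ and the (trivially satisfied) majorization $w^\downarrow \preceq \sum_i q_i\mu^\downarrow(\ket{\psi_i}\bra{\psi_i}) = w^\downarrow$; this directly certifies $\ket{\chi}\bra{\chi}\locc\rhoab$, i.e. $\ket{\chi}\bra{\chi}\in\O(\rhoab)$, and thus $w = \mu^\downarrow(\ket{\chi}\bra{\chi})\in\Uo(\rhoab)$, giving (ii) with $w$ itself in $\Uo(\rhoab)$ rather than merely majorized by an element of it. I expect this observation — that $\Ud(\rhoab)\subseteq\Uo(\rhoab)$ outright, while (i) gives that every element of $\Uo(\rhoab)$ is majorized by one of $\Ud(\rhoab)$ — to be the slickest packaging, and the suprema then coincide.
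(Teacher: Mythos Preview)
Your proposal is correct and essentially identical to the paper's proof: the paper also shows (a) $\Ud(\rhoab)\subseteq\Uo(\rhoab)$ by constructing, for each $w\in\Ud(\rhoab)$, a pure state with Schmidt vector $w$ and invoking the ``$\Leftarrow$'' direction of Theorem~\ref{th:plenio_th}, and (b) that every $u\in\Uo(\rhoab)$ is majorized by some $w\in\Ud(\rhoab)$ via the ``$\Rightarrow$'' direction of Theorem~\ref{th:plenio_th}, then concludes by the same order-theoretic observation you spell out. Your final ``slickest packaging'' paragraph is exactly the paper's argument.
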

%%%%

Since Definitions \ref{def:Schmidt vector_D} and \ref{def:Schmidt vector_O} are equivalent, we
define the Schmidt vector of $\rhoab$ as:

\begin{definition}
For a bipartite pure state $\rhoab$, we define its Schmidt vector $\nu (\rhoab)$ as
\begin{equation}
    \nu (\rhoab) = \nu_{\D} (\rhoab) = \nu_{\O} (\rhoab). 
\end{equation}
    
\end{definition}

The Schmidt vector $\nu(\rho^{AB})$ generalizes the Schmidt vector for mixed states. 
Specifically, if $\rho^{AB}$ is a pure state \(\ket{\psi}\bra{\psi}\), then $\mu^\downarrow(\ket{\psi}\bra{\psi})=\bigvee \Uo(\rhoab)$. 
Furthermore, it possesses relevant properties that reflect key aspects of the notion of entanglement:

\begin{theorem}[\textbf{Schmidt vector properties}]
\label{th:Schmidt_properties}
The Schmidt vector $\nu(\rhoab)$ has the following properties:
\begin{enumerate}
	\item Top only attainable for separable states: 
	$\rhoab$ is separable if and only if $\nu(\rhoab) = (1,0,\ldots, 0)$. \label{c1:sep_sch}\\
        \item Monotonicity under LOCC: $\nu(\rhoab) \preceq  \nu(\Lambda(\rhoab))$ where $\Lambda$ is a LOCC and $\rhoab$ any bipartite state.  \label{c2:mono_sch}\\
	\item Strong monotonicity under LOCC: $\nu(\rhoab) \preceq \sum_{n = 1}^{N}  p_n  \nu(\sigma^{AB}_n)$,  for all state $\rhoab$ and    all LOCC $\Lambda$ given by Kraus operators $\{K_n\}_{n= 1}^N$, where $p_n = \Tr{K_n \rhoab K^\dag_n}$ and $\sigma^{AB}_n = K_n \rhoab K^\dag_n / p_n$. \label{c3:strong_mono_sch}
        \item Bottom only attainable for maximally entangled states:  $\nu(\rhoab) = (1/d,\ldots, 1/d)$ if and only if   $\rhoab$ is of the form $\rhoab_{\max} = \ket{\psi_{\max}}\bra{\psi_{\max}}$ with $\ket{\psi_{\max}} = \sum^d_{i=1} 1/\sqrt{d} \ket{a_i} \ket{b_i}$.  \label{c4:maxent_sch}

\end{enumerate}

\end{theorem}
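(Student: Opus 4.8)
The plan is to prove the four properties by combining the two equivalent characterizations of $\nu(\rhoab)$ from Theorem \ref{th:equivalence_extensions} with the LOCC-convertibility results (Theorems \ref{th:nielsen_th} and \ref{th:plenio_th}) and the lattice structure of majorization. Each property uses whichever of the two descriptions, $\nu_\D$ or $\nu_\O$, is more convenient.

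\emph{Property \ref{c1:sep_sch} (top only for separable states).} If $\rhoab$ is separable it admits a decomposition into product pure states $\ket{\psi_i}=\ket{a_i}\ket{b_i}$, each with Schmidt vector $(1,0,\ldots,0)$, so $(1,0,\ldots,0)\in\Ud(\rhoab)$; since $(1,0,\ldots,0)$ majorizes every element of $\psetord$, it is the top of the lattice, hence $\nu_\D(\rhoab)=(1,0,\ldots,0)$. Conversely, if $\nu(\rhoab)=(1,0,\ldots,0)$ then every vector in $\Ud(\rhoab)$ is majorized by $(1,0,\ldots,0)$ and therefore equals it; picking any pure decomposition $\{q_i,\ket{\psi_i}\}$ we get $\sum_i q_i \mu^\downarrow(\ket{\psi_i}\bra{\psi_i})=(1,0,\ldots,0)$, which forces each $\mu(\ket{\psi_i}\bra{\psi_i})=(1,0,\ldots,0)$, i.e.\ each $\ket{\psi_i}$ is a product state, so $\rhoab$ is separable.

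\emph{Properties \ref{c2:mono_sch} and \ref{c3:strong_mono_sch} (monotonicity and strong monotonicity).} For strong monotonicity I use the $\Uo$ picture. Let $\Lambda$ be an LOCC with Kraus operators $\{K_n\}$, $p_n=\Tr{K_n\rhoab K_n^\dagger}$, $\sigma_n=K_n\rhoab K_n^\dagger/p_n$. For each $n$ pick any $\ket{\phi_n}\bra{\phi_n}\in\O(\sigma_n)$, so there is an LOCC $\ket{\phi_n}\bra{\phi_n}\locc\sigma_n$; I want to build, from the $\ket{\phi_n}$, a single pure state that can be converted to $\rhoab$ and whose Schmidt vector is majorized by $\sum_n p_n\mu^\downarrow(\ket{\phi_n}\bra{\phi_n})$. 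Concretely, one takes a pure state $\ket{\psi}$ whose Schmidt vector is the supremum (or any vector majorized by $\sum_n p_n \mu^\downarrow(\ket{\phi_n}\bra{\phi_n})$ and majorizing $\mu(\ket{\psi}\bra{\psi})$ need not exist, so instead) one applies Theorem \ref{th:plenio_th}: the data $\{p_n,\ket{\phi_n}\}$ together with the target $\sigmaab':=\sum_n p_n\ket{\phi_n}\bra{\phi_n}$ gives $\ket{\psi}\bra{\psi}\locc\sigmaab'$ for a suitable $\ket{\psi}$, and then composing with the branch-wise LOCC $\ket{\phi_n}\bra{\phi_n}\locc\sigma_n$ (selected by the classical label $n$) yields $\ket{\psi}\bra{\psi}\locc\Lambda(\rhoab)$-type relations. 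Passing to the supremum over all such choices and using that $\bigvee$ is monotone under set inclusion and under the majorization bound in (2) of Theorem \ref{th:plenio_th} gives $\nu(\rhoab)=\bigvee\Uo(\rhoab)\preceq\sum_n p_n\,\bigvee\Uo(\sigma_n)=\sum_n p_n\nu(\sigma_n)$. Monotonicity \ref{c2:mono_sch} follows as the special case where we only track the convex combination, using that $\O(\Lambda(\rhoab))\subseteq$ the states reachable from $\rhoab$, hence $\nu(\rhoab)\preceq\nu(\Lambda(\rhoab))$ by inclusion $\Uo(\Lambda(\rhoab))\subseteq\Uo(\rhoab)$ and monotonicity of $\bigvee$.

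\emph{Property \ref{c4:maxent_sch} (bottom only for maximally entangled states).} The vector $(1/d,\ldots,1/d)$ is the bottom element of $\langle\psetord,\preceq\rangle$, so $\nu(\rhoab)=(1/d,\ldots,1/d)$ iff every element of $\Ud(\rhoab)$ is majorized by $(1/d,\ldots,1/d)$, which since $(1/d,\ldots,1/d)$ is the bottom forces every element to equal $(1/d,\ldots,1/d)$. Taking any pure decomposition $\{q_i,\ket{\psi_i}\}$, $\sum_i q_i\mu^\downarrow(\ket{\psi_i}\bra{\psi_i})=(1/d,\ldots,1/d)$; since $(1/d,\ldots,1/d)$ is the unique majorization-minimal point and majorization is preserved under convex combinations only in the trivial direction, each $\mu^\downarrow(\ket{\psi_i}\bra{\psi_i})$ must already equal $(1/d,\ldots,1/d)$ (any vector strictly majorizing it would push the average up). Hence every $\ket{\psi_i}$ is maximally entangled; but a mixture of maximally entangled states equals a maximally entangled state only when the mixture is trivial (this is the point where one invokes that $(1/d,\ldots,1/d)$ has maximal entropy, so strict Schur-concavity of, say, the Shannon entropy applied via Theorem 5 and $E^{\roof}_f$ pins down purity), so $\rhoab$ is the pure state $\ket{\psi_{\max}}\bra{\psi_{\max}}$. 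The converse is immediate from $\nu(\ket{\psi_{\max}}\bra{\psi_{\max}})=\mu^\downarrow(\ket{\psi_{\max}}\bra{\psi_{\max}})=(1/d,\ldots,1/d)$.

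The main obstacle I anticipate is Property \ref{c3:strong_mono_sch}: correctly assembling the branch-wise LOCC protocols $\ket{\phi_n}\bra{\phi_n}\locc\sigma_n$ into a single protocol from one pure state $\ket{\psi}$ to $\Lambda(\rhoab)$ while controlling its Schmidt vector via Theorem \ref{th:plenio_th}, and then commuting this construction with the supremum $\bigvee$ over the (generally infinite) sets $\Uo(\sigma_n)$. A secondary subtlety is the "only trivial mixture" step in Property \ref{c4:maxent_sch}, which needs a genuine rigidity argument rather than just majorization bookkeeping; I expect to route it through an auxiliary lemma in Appendix \ref{app:aux_proofs}.
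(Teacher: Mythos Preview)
Your proposal has genuine gaps in Properties~\ref{c1:sep_sch} and~\ref{c3:strong_mono_sch}, a reversed inclusion in Property~\ref{c2:mono_sch}, and an incomplete rigidity step in Property~\ref{c4:maxent_sch}.

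\textbf{Property \ref{c1:sep_sch}, converse.} Your claim that ``every vector in $\Ud(\rhoab)$ is majorized by $(1,0,\ldots,0)$ and therefore equals it'' is false: \emph{every} probability vector is majorized by $(1,0,\ldots,0)$, so the implication is vacuous. Concretely, for the maximally mixed two-qubit state one has $\nu=(1,0)$, yet the Bell-basis decomposition contributes $(1/2,1/2)\in\Ud$. The paper closes this gap via Lemma~\ref{lemma:Sj}: the partial-sum supremum $S_1$ is \emph{attained} by some decomposition $\{\tilde q_i,\ket{\tilde\psi_i}\}$, so $\sum_i \tilde q_i\,\mu_1^\downarrow(\ket{\tilde\psi_i}\bra{\tilde\psi_i})=1$, which forces each $\ket{\tilde\psi_i}$ to be a product state.

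\textbf{Property \ref{c2:mono_sch}.} Your $\Uo$ approach is in fact cleaner than the paper's (which works on the $\Ud$ side via Lemma~\ref{lemma:state_transf_mixed}), but you reversed the inclusion. The correct direction is $\O(\rhoab)\subseteq\O(\Lambda(\rhoab))$: if $\ket{\psi}\bra{\psi}\locc\rhoab$, composing with $\Lambda$ gives $\ket{\psi}\bra{\psi}\locc\Lambda(\rhoab)$. Hence $\Uo(\rhoab)\subseteq\Uo(\Lambda(\rhoab))$ and $\nu(\rhoab)=\bigvee\Uo(\rhoab)\preceq\bigvee\Uo(\Lambda(\rhoab))=\nu(\Lambda(\rhoab))$.

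\textbf{Property \ref{c3:strong_mono_sch}.} Starting from $\ket{\phi_n}\in\O(\sigma_n)$ and assembling a pure state that reaches $\rhoab$ runs in the wrong direction: you would be manufacturing elements of $\Uo(\rhoab)$, i.e.\ \emph{lower} bounds on $\nu(\rhoab)$, whereas you need $\sum_n p_n\nu(\sigma_n)$ to be an \emph{upper} bound. The paper instead pushes decompositions of $\rhoab$ \emph{forward} through the Kraus operators: given $\{q_i,\ket{\psi_i}\}\in\D(\rhoab)$, set $\ket{\phi_{n,i}}=K_n\ket{\psi_i}/\sqrt{p_{n,i}}$; Theorem~\ref{th:plenio_th} yields $\mu^\downarrow(\ket{\psi_i}\bra{\psi_i})\preceq\sum_n p_{n,i}\,\mu^\downarrow(\ket{\phi_{n,i}}\bra{\phi_{n,i}})$, and since $\{q_i p_{n,i}/p_n,\ket{\phi_{n,i}}\}_i\in\D(\sigma_n)$, the resulting convex combination is majorized by $\sum_n p_n\nu(\sigma_n)$. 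Your ``branch-wise assembly'' never produces this bound.

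\textbf{Property \ref{c4:maxent_sch}.} You correctly reach the point where every pure state in every decomposition of $\rhoab$ is maximally entangled, but invoking $E^{\roof}_f$ does not pin down purity: $E^{\roof}_f(\rhoab)$ being maximal says nothing about $\rank\rhoab$. The paper's argument is structural: if $\rhoab$ were mixed with orthogonal maximally entangled eigenvectors $\ket{e_1},\ket{e_2}$, the Schr\"odinger mixture theorem produces another decomposition containing a nontrivial superposition $\alpha\ket{e_1}+\beta\ket{e_2}$, which is then argued not to be maximally entangled, contradicting the first step.
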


\subsection{Equivalence between Schmidt rank and Schmidt number}
\label{sec:equivalence_Schmidtrank_Schmidtnumber}

We can naturally extend the definition of the Schmidt rank from pure states to mixed states as follows.

\begin{definition}[\textbf{Schmidt rank of a density matrix}]
    We define the Schmidt rank of a density matrix $\rhoab$ as follows
    \begin{equation}
    \rk(\rhoab) = \#\supp(\nu(\rhoab)).    
    \end{equation}
    
\end{definition}

In what follows we prove that the Schmidt rank of a density matrix equals its Schmidt number.

\begin{theorem}[\textbf{Equivalence between Schmidt rank and Schmidt number}]
\label{th:equivalence_Schmidtrank_Schmidtnumber}
For any bipartite state $\rhoab$, we have
\begin{equation}
    \rk(\rhoab) = \sn(\rhoab).
\end{equation}

\end{theorem}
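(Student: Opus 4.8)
The plan is to establish the equality $\rk(\rhoab) = \sn(\rhoab)$ by sandwiching, showing both inequalities. The key fact I would use is the alternative characterization of the Schmidt number, $\sn(\rhoab) = \min_{\{q_i,\ket{\psi_i}\} \in \D(\rhoab)} \max_i \rk(\ket{\psi_i}\bra{\psi_i})$, together with the fact that $\nu(\rhoab) = \nu_{\D}(\rhoab) = \bigvee \Ud(\rhoab)$ and the explicit algorithm for the supremum in the majorization lattice recalled in Section~\ref{sec:majorization}. The crucial observation is that $\#\supp(v)$ for a probability vector $v\in\psetord$ is exactly the largest index $j$ such that $s_j(v) < 1$, plus one; equivalently, it is determined by where the Lorenz curve $L_v$ first reaches the value $1$.

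For the inequality $\rk(\rhoab) \leq \sn(\rhoab)$: let $n = \sn(\rhoab)$ and pick, by the second defining property of the Schmidt number, a pure-state decomposition $\{q_i, \ket{\psi_i}\}$ with $\rk(\ket{\psi_i}\bra{\psi_i}) \leq n$ for all $i$. Then each $\mu^\downarrow(\ket{\psi_i}\bra{\psi_i})$ has support of size at most $n$, i.e. $s_n(\mu^\downarrow(\ket{\psi_i}\bra{\psi_i})) = 1$ for every $i$, hence the convex combination $w = \sum_i q_i \mu^\downarrow(\ket{\psi_i}\bra{\psi_i}) \in \Ud(\rhoab)$ also satisfies $s_n(w) = 1$, so $\#\supp(w) \leq n$. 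Since $\nu(\rhoab) = \bigvee \Ud(\rhoab) \succeq w$, I need that the supremum cannot have larger support than a vector it dominates; but from the supremum algorithm, $S_n = \sup\{s_n(u): u \in \Ud(\rhoab)\}$, and I must argue $S_n = 1$. This requires showing that $s_n(u) \leq 1$ with equality attained, which holds because $w$ attains it — but wait, other decompositions could have $s_n(u) < 1$; the supremum over all of them is still $1$ because $w$ is in the set. Hence $L_{\nu(\rhoab)}(n) = 1$, giving $\rk(\rhoab) \leq n$.

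For the reverse inequality $\rk(\rhoab) \geq \sn(\rhoab)$: set $r = \rk(\rhoab)$, so $s_r(\nu(\rhoab)) = 1$, meaning $L_{\bigvee\Ud(\rhoab)}(r) = 1$, hence $S_r = \sup\{s_r(u) : u \in \Ud(\rhoab)\} = 1$. I then want to extract a decomposition witnessing $\sn(\rhoab) \leq r$. Here the obstacle is that $S_r = 1$ is a supremum, not necessarily attained, and even if $s_r(u) = 1$ for some $u = \sum_i q_i \mu^\downarrow(\ket{\psi_i}\bra{\psi_i})$, this forces $\sum_i q_i s_r(\mu^\downarrow(\ket{\psi_i}\bra{\psi_i})) = 1$ with each term $\leq 1$, so $s_r(\mu^\downarrow(\ket{\psi_i}\bra{\psi_i})) = 1$ for every $i$ with $q_i > 0$, i.e. $\rk(\ket{\psi_i}\bra{\psi_i}) \leq r$ for all $i$; this decomposition then shows $\sn(\rhoab) \leq r$. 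So the real work is proving the supremum $S_r$ is actually attained by some element of $\Ud(\rhoab)$ — I would handle this with a compactness/Carathéodory-type argument bounding the number $M$ of terms in a decomposition (using that density matrices on $\Hab$ live in a finite-dimensional space, so decompositions with $M \leq (d^{AB})^2$ suffice), making $\Ud(\rhoab)$ a continuous image of a compact set and hence compact, so the supremum is a maximum. I expect this attainment/compactness step to be the main obstacle; the rest is the elementary bookkeeping on Lorenz curves and supports sketched above, and I would relegate the compactness lemma to Appendix~\ref{app:aux_proofs}.
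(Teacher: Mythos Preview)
Your proposal is correct and follows essentially the same route as the paper's proof: both directions are the same sandwich argument, and the crucial attainment of the supremum $S_r$ that you flag as the main obstacle is exactly what the paper isolates as Lemma~\ref{lemma:Sj} (proved via Uhlmann's compactness result, which is the same Carath\'eodory/compactness mechanism you describe). The only cosmetic differences are that the paper phrases the first inequality via $v_n=(1/n,\ldots,1/n,0,\ldots,0)\preceq\nu(\rhoab)$ rather than your $S_n=1$, and for the second inequality it works with $S_r=S_{r+1}$ rather than your more direct $S_r=1$; both variants are equivalent and your formulation is arguably a touch cleaner.
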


\subsection{Example}
\label{sec:example}

We illustrate as an example the Schmidt vectors for the family of states $\rho_\lambda\in\Ha \otimes \Hb$, with $\dim \Ha = \dim \Hb=d$ ,
\begin{equation}
\label{eq:example}
   \rho_\lambda=\left(1-\lambda\right) \frac{\mathbb{1}-\ket{\psi_{\max}}\bra{\psi_{\max}}}{d^2-1}+\lambda \ket{\psi_{\max}}\bra{\psi_{\max}},
\end{equation}
with $\lambda=[0,1]$.
%When $\lambda\leq\frac{1}{d}$, $\rho_\lambda$ is separable. Conversely, if $\lambda>\frac{1}{d}$, it is entangled\cite{Horodecki_rho_lambda}.

Specifically, we numerically obtain the Schmidt vector components $\nu_i(\rho_{\lambda_k})$ for the family of states defined in \eqref{eq:example} for two-qutrits ($d=3$) and for $\lambda_k = 0.05 (k-1)$ with $k=1, \ldots,21$ (see Appendix~\ref{app:numerical}).  
Figure~\ref{fig:figure1}.(left) shows the $i$-th component of the Schmidt vector. 
We observe that $\nu_1$ decreases monotonically up to $1/3$, whereas $\nu_2$ and $\nu_3$ increase monotonically up to $1/3$. 
Additionally, we find that the Schmidt rank is: $\rk(\rho_{\lambda_{k}})=1$ for $\lambda_k \leq 1/3$, $\rk(\rho_{\lambda_{k}})=2$ for $1/3 \leq \lambda_k \leq 2/3$ and $\rk(\rho_{\lambda_{k}})=3$ for $2/3 \leq \lambda_k \leq 1$. From these results, one can conclude that for $\lambda \leq \frac{1}{3}$, $\rho_\lambda$ is separable, while for $\lambda > \frac{1}{3}$, it is entangled. 
This is in agreement with \cite{Horodecki1999}.
Figure~\ref{fig:figure1}.(right) depicts the Lorenz curves associated to each Schmidt vector $\nu\left(\rho_{\lambda_{k}}\right)$. 
We observe that $\| \nu(\rho_{\lambda_k}) - \nu(\rho_{\lambda_{k'}})\|\leq 10^{-3}$  for $k,k'=1,\ldots,7$ and $ \nu(\rho_{\lambda_{k+1}}) \preceq \nu(\rho_{\lambda_{k}})$ for $k=7,\ldots,21$.

\begin{figure}[htb]\centering
	\includegraphics[width=0.9\textwidth]{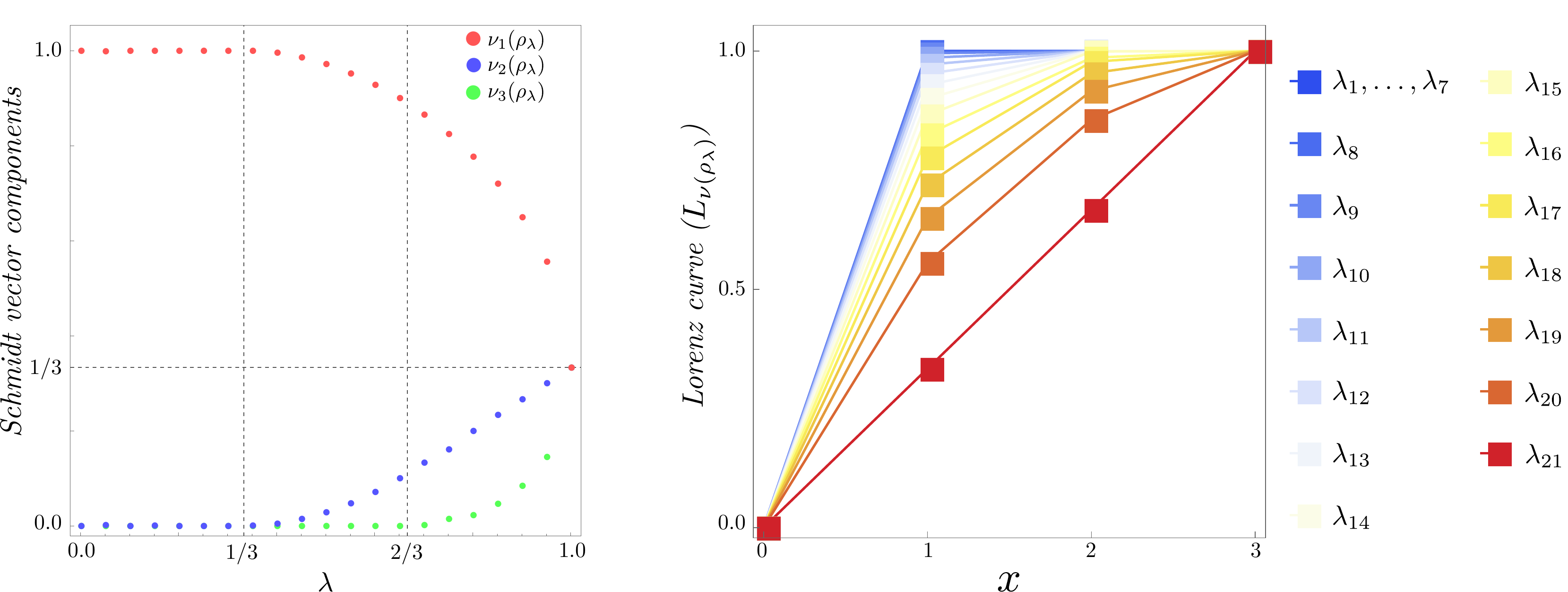}  
	\caption{Schmidt vector components $\nu_i(\rho_{\lambda_k})$ (left) and Lorenz curve $L_{\nu(\rho_{\lambda_k})}$ (right) for the states $\rho_{\lambda_k}$ in~\eqref{eq:example} with $d=3$ and $\lambda_k = 0.05 (k-1)$ where $k=1, \ldots,21$.}
 \label{fig:figure1}
\end{figure}

\subsection{Entanglement monotones from the Schmidt vector}
\label{sec:Schmidt_vector_entanglement_monotone}

We can define a family of entanglement monotones from the Schmidt vector of a quantum state.
\begin{theorem}[\textbf{Schmidt vector entanglement monotone}]
\label{th:Schmidt_vector_entanglement_monotone}
For any $f \in \F$,  $E^{\nu}_{f}: \S(\H^{AB})  \to \R$ given by
	\begin{equation}
	\label{eq:Schmidt_vector_entanglement_monotone}
	E^{\nu}_{f}(\rhoab) =  f(\nu(\rhoab))
	\end{equation}
is an entanglement monotone, that is, it satisfies the conditions~\ref{c1:nonneg}--\ref{c4:normalization}.
   
\end{theorem}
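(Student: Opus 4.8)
The plan is to verify conditions \ref{c1:nonneg}--\ref{c4:normalization} for $E^{\nu}_{f}(\rhoab) = f(\nu(\rhoab))$ one at a time, exploiting the characterization results already established. First I would note that $f \in \F$ maps into $[0,1]$, which immediately gives non-negativity of $E^\nu_f$; for the vanishing part of condition \ref{c1:nonneg}, I would invoke property \ref{c1:sep_sch} of Theorem \ref{th:Schmidt_properties}: if $\rhoab$ is separable then $\nu(\rhoab) = (1,0,\ldots,0)$, and by definition of $\F$ we have $f(1,0,\ldots,0) = 0$, so $E^\nu_f(\rhoab) = 0$. Conversely, I should be a little careful: condition \ref{c1:nonneg} as stated only requires $E = 0$ \emph{for} separable states, not an iff, so this direction suffices; but it is worth remarking that if one wants faithfulness it would follow from strict Schur-concavity of $f$ together with property \ref{c1:sep_sch}.

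For monotonicity under LOCC (condition \ref{c2:monotonicity}), the key input is property \ref{c2:mono_sch} of Theorem \ref{th:Schmidt_properties}: $\nu(\rhoab) \preceq \nu(\Lambda(\rhoab))$ for any LOCC $\Lambda$. Since $f$ is symmetric and concave, Theorem (Symmetry and concavity implies Schur-concavity) tells us $f$ is Schur-concave, so $u \preceq v$ implies $f(u) \geq f(v)$. Applying this with $u = \nu(\rhoab)$ and $v = \nu(\Lambda(\rhoab))$ yields $E^\nu_f(\rhoab) = f(\nu(\rhoab)) \geq f(\nu(\Lambda(\rhoab))) = E^\nu_f(\Lambda(\rhoab))$, which is exactly condition \ref{c2:monotonicity}. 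For strong monotonicity (condition \ref{c3:strong_monotonicity}), I would use property \ref{c3:strong_mono_sch}: $\nu(\rhoab) \preceq \sum_{n=1}^N p_n \nu(\sigma^{AB}_n)$ with the $p_n, \sigma^{AB}_n$ as in the statement. Schur-concavity of $f$ gives $f(\nu(\rhoab)) \geq f\!\left(\sum_n p_n \nu(\sigma^{AB}_n)\right)$, and then concavity of $f$ gives $f\!\left(\sum_n p_n \nu(\sigma^{AB}_n)\right) \geq \sum_n p_n f(\nu(\sigma^{AB}_n)) = \sum_n p_n E^\nu_f(\sigma^{AB}_n)$. Chaining the two inequalities delivers condition \ref{c3:strong_monotonicity}.

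Finally, for condition \ref{c4:normalization} I would use property \ref{c4:maxent_sch}: $\nu(\rhoab) = (1/d,\ldots,1/d)$ if and only if $\rhoab = \ket{\psi_{\max}}\bra{\psi_{\max}}$. By definition of $\F$, $\arg\max_{u} f(u) = (1/d,\ldots,1/d)$, so $E^\nu_f$ is maximized exactly on the uniform Schmidt vector; combined with the iff in property \ref{c4:maxent_sch}, the argmax of $E^\nu_f$ over $\S(\Hab)$ is precisely the set of maximally entangled pure states. I expect the proof to be essentially a routine assembly, so there is no serious obstacle; the only point requiring mild care is the argmax claim in condition \ref{c4:normalization}, where I must confirm that $f$ attains its maximum value over all probability vectors \emph{only} at the uniform vector (not merely among its maximizers), which is guaranteed by the $\arg\max$ clause in the definition of $\F$ being the singleton $\{(1/d,\ldots,1/d)\}$; if $f$ were only weakly concave this could in principle fail, but the definition of $\F$ rules that out. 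A secondary subtlety is that $\nu(\rhoab)$ lives in $\Delta^\downarrow_d$ while $f$ is defined on $\mathbb{R}^d$ (or $\Delta_d$), so I would note at the outset that restricting $f$ to $\Delta^\downarrow_d$ loses nothing since $f$ is symmetric.
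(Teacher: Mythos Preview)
Your proposal is correct and follows essentially the same route as the paper: verify each of conditions \ref{c1:nonneg}--\ref{c4:normalization} by invoking the corresponding property \ref{c1:sep_sch}--\ref{c4:maxent_sch} of Theorem \ref{th:Schmidt_properties} together with Schur-concavity (and, for strong monotonicity, concavity) of $f$. Your treatment of condition \ref{c4:normalization} is in fact slightly more careful than the paper's, which only shows that the maximum is \emph{attained} at maximally entangled states and defers the ``only'' direction to a subsequent remark requiring strict concavity.
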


Moreover, if $f$ is also strictly concave, hence strictly Schur-concave, then $E^{\nu}_{f}(\rhoab)=0$ implies that $\rhoab$ is separable.
Additionally, in this case, the maximum of $E^{\nu}_{f}(\rhoab)$ is attained only for maximally entangled states.

The following example shows that the $E_f^{\nu}$ and $E^{\roof}_f$ are distinct families of entanglement monotones.
Consider the family of states $\rho_\lambda$ given in~\eqref{eq:example} and the function $f \in \F$ of the form $f(u) = 1 - u^\downarrow_1+u^\downarrow_d$.
Specifically, for $d=3$ and $\lambda_k = 0.05 (k-1)$ where $k=1, \ldots,21$, we numerically calculate the $E_f^{\nu}(\lambda_k)$ and  $E^{\roof}_f(\lambda_k)$.
The plots in Figure~\ref{fig:figure2} show that these values are numerically distinguishable for  $1/3<\lambda<1$.

\begin{figure}[htb]\centering
	\includegraphics[width=0.45\textwidth]{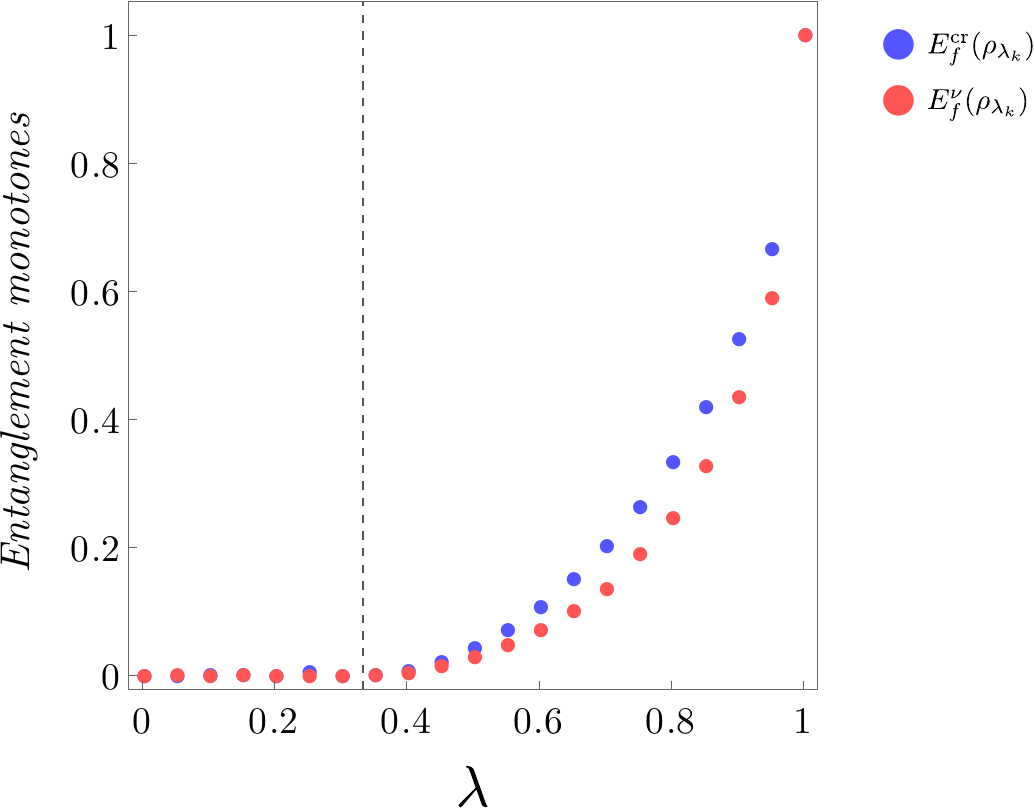}  
	\caption{$E^{\roof}_f$ (blue points) and $E_f^{\nu}$ (red points) for the states $\rho_{\lambda_k}$ in~\eqref{eq:example} with $d=3$ and $\lambda_k = 0.05 (k-1)$ where $k=1, \ldots,21$. The vertical dashed line corresponds to $\lambda=1/3$.}
 \label{fig:figure2}
\end{figure}

\section{Concluding remarks}
\label{sec:conclusions}

In this work, we have addressed the problem of extending the notion of the Schmidt vector from pure states to mixed states, a useful task for characterizing bipartite quantum entanglement. We demonstrated that it is possible to define a Schmidt vector that effectively characterizes entanglement in mixed states using two distinct methods: as a concave roof extension of Schmidt vectors of pure states, or equivalently, from the set of pure states that can be transformed into the mixed state through local operations and classical communication (LOCC).

Our results show that this Schmidt vector fully characterizes separable and maximally entangled states, and it satisfies monotonic and strong monotonic properties under LOCC, providing necessary conditions for conversions between mixed states. 
Additionally, we have extended the definition of the Schmidt rank from pure states to mixed states as the cardinality of the support of the Schmidt vector and have shown that it is equal to the Schmidt number.

Finally, we have introduced a family of entanglement monotones by applying concave and symmetric functions to the Schmidt vector, thereby expanding the tools available for quantifying and characterizing entanglement for mixed quantum states. 

These findings not only enhance the theoretical understanding of entanglement but also could have practical applications in detecting entanglement through entanglement witnesses, as the largest Schmidt coefficient appears in fidelity-based witness operators.  
Further exploration of this connection is warranted, and we leave it open for future research.
Another open problem is extending our results to the multipartite case. 
This extension presents additional complexities and challenges but could further broaden the applicability of the Schmidt vector in quantum information theory.

\begin{acknowledgments}
 GMB acknowledges financial support from project PIBAA 0718 funded by Consejo Nacional de Investigaciones Cient\'ificas y T\'ecnicas CONICET (Argentina).

\end{acknowledgments}

\appendix

\section{Auxiliary lemmas and proofs}
\label{app:aux_proofs}

\subsection{Auxiliary lemmas}

In the first place, we have the auxiliary lemma.

\begin{lemma}
\label{lemma:Sj}
    Given a bipartite state $\rhoab$, for each $1\leq j \leq d$ we consider $S_j = \sup \{s_j(u) : u \in \Ud(\rhoab)\}$, with $s_j(u) = \sum_{i= 1}^j u_i$. 
    There is a pure state decomposition $\{q^{(j)}_i, \ket{\psi^{(j)}_i}\}_{i= 1}^M  \in \D(\rhoab)$ such that
    \begin{equation}
     S_j =  \sum_{i = 1}^{M} q^{(j)}_i s_j(\mu^\downarrow(\ket{\psi^{(j)}_i}\bra{\psi^{(j)}_i})).   
    \end{equation}
    
\end{lemma}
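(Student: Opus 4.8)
The plan is to upgrade the supremum defining $S_j$ to a genuine maximum realized by a pure state decomposition with finitely many terms. The starting observation is that $s_j$ is additive on a convex combination, so that
\begin{equation}
S_j=\sup_{\{q_i,\ket{\psi_i}\}_{i=1}^M\in\D(\rhoab)}\ \sum_{i=1}^M q_i\,g(\ket{\psi_i}),\qquad g(\ket{\psi}):=s_j\big(\mu^\downarrow(\ket{\psi}\bra{\psi})\big),
\end{equation}
where $g(\ket{\psi})$ is just the sum of the $j$ largest eigenvalues of $\tr_B(\ket{\psi}\bra{\psi})$. First I would record that $g$ is continuous on the unit sphere of $\Hab$: the ordered eigenvalue vector depends continuously on a Hermitian matrix and $\ket{\psi}\mapsto\tr_B(\ket{\psi}\bra{\psi})$ is continuous; alternatively one may invoke Ky Fan's formula $g(\ket{\psi})=\max\{\tr(P\,\tr_B\ket{\psi}\bra{\psi}):P=P^\dagger=P^2,\ \tr P=j\}$, a maximum of continuous functions over a compact set of projectors.

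Next I would pass to a convex-hull picture in a finite-dimensional real vector space. Set $V=\mathrm{Herm}(\Hab)\oplus\R$ and let $C=\{(\ket{\psi}\bra{\psi},g(\ket{\psi})):\ket{\psi}\in\Hab,\ \|\ket{\psi}\|=1\}\subseteq V$, which is compact as the continuous image of the unit sphere. Then $K=\mathrm{conv}(C)$ is compact, because in finite dimension the convex hull of a compact set is compact. By construction, $(\rhoab,t)\in K$ if and only if there is a decomposition $\{q_i,\ket{\psi_i}\}_{i=1}^M\in\D(\rhoab)$ with $t=\sum_i q_i\,g(\ket{\psi_i})$, so that $S_j=\sup\{t:(\rhoab,t)\in K\}$.

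Then I would extract the maximizer. The slice $\{t:(\rhoab,t)\in K\}$ is the intersection of the compact set $K$ with the closed line $\{\rhoab\}\times\R$, hence compact, and it is nonempty since the spectral decomposition of $\rhoab$ furnishes an element of $\D(\rhoab)$. A nonempty compact subset of $\R$ contains its supremum, so $(\rhoab,S_j)\in K$. Applying Carathéodory's theorem in $V$, the point $(\rhoab,S_j)$ is a finite convex combination of points of $C$, say $(\rhoab,S_j)=\sum_{i=1}^M q_i\big(\ket{\psi_i}\bra{\psi_i},g(\ket{\psi_i})\big)$ with $q_i\ge 0$ and $\sum_i q_i=1$. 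Reading off the two components, $\rhoab=\sum_{i=1}^M q_i\ket{\psi_i}\bra{\psi_i}$, i.e. $\{q_i,\ket{\psi_i}\}_{i=1}^M\in\D(\rhoab)$, and $S_j=\sum_{i=1}^M q_i\,g(\ket{\psi_i})=\sum_{i=1}^M q_i\,s_j(\mu^\downarrow(\ket{\psi_i}\bra{\psi_i}))$, which is exactly the asserted identity with $q_i^{(j)}:=q_i$ and $\ket{\psi_i^{(j)}}:=\ket{\psi_i}$.

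The main obstacle is precisely this attainment statement: a priori the decompositions in $\D(\rhoab)$ (equivalently, the elements of $\Ud(\rhoab)$) involve an unbounded number of pure states, so it is not immediate that the supremum is achieved rather than merely approached. The device that resolves it is to work in the compact convex hull $K$ inside the finite-dimensional space $V$, which provides a maximizing point, and then to use Carathéodory's theorem to return to a bona fide finite pure state decomposition. The only other point requiring a little care is the continuity of $g$, which is what makes $C$ (hence $K$) compact.
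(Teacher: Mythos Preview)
Your proof is correct. Both you and the paper rewrite $S_j$ as $\sup_{\{q_i,\ket{\psi_i}\}\in\D(\rhoab)}\sum_i q_i\,s_j(\mu^\downarrow(\ket{\psi_i}\bra{\psi_i}))$ and note that $\ket{\psi}\mapsto s_j(\mu^\downarrow(\ket{\psi}\bra{\psi}))$ is continuous on the unit sphere; the paper then simply invokes Uhlmann's result on roof constructions \cite[Prop.~3.5]{Uhlmann2010} to conclude that an optimal decomposition exists, whereas you supply a self-contained argument for precisely that attainment step via the compactness of $\mathrm{conv}\{(\ket{\psi}\bra{\psi},g(\ket{\psi}))\}$ in the finite-dimensional space $\mathrm{Herm}(\Hab)\oplus\R$. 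Your route is essentially how one would prove the cited proposition in this instance, so conceptually the two arguments coincide; yours is more elementary in that it avoids an external black box, while the paper's is terser. One minor remark: once $(\rhoab,S_j)$ lies in the convex hull of $C$, it is by definition a finite convex combination of points of $C$, so Carath\'eodory is not strictly needed here---it would only serve to bound the number of terms $M$.
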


\begin{proof}
For each $u \in \Ud(\rhoab)$ there is a pure state decomposition $\{q_i,\ket{\psi_i}\}_{i=1}^M\in\mathcal{D}(\rhoab)$, such that $u=\sum_{k = 1}^{M}  q_k \mu^\downarrow(\ket{\psi_k}\bra{\psi_k})$.
Then, for each $1 \leq j \leq d$, we have $s_j(u) =\sum_{k = 1}^{M}  q_k s_j(\mu^\downarrow(\ket{\psi_k}\bra{\psi_k}))$. 
Therefore, $S_j = \sup \{\sum_{k = 1}^{M}  q_k s_j(\mu^\downarrow(\ket{\psi_k}\bra{\psi_k})) : {\{q_i,\ket{\psi_i}\}_{i=1}^M\in\mathcal{D}(\rhoab)} \}$.

Since $s_j(\mu^\downarrow (\cdot)) : \P(\Hab) \to [0,1]$ is a continuous function, due to \cite[Prop. 3.5 ]{Uhlmann2010}, there is an optimal pure state decomposition $\{q^{(j)}_i, \ket{\psi^{(j)}_i}\}_{i= 1}^M $ such that
\begin{equation}
S_j =  \sum_{i = 1}^{M} q^{(j)}_i s_j(\mu^\downarrow(\ket{\psi^{(j)}_i}\bra{\psi^{(j)}_i})).    
\end{equation}

\end{proof}

In the second place, we have the auxiliary lemma.
\begin{lemma}
\label{lemma:state_transf_mixed}
Let  $\rhoab$ and $\sigmaab$ be two arbitrary bipartite states. 
Then
\begin{equation}
\begin{split}
    \rhoab\locc\sigma^{AB} \implies & \forall\{q_i,\ket{\psi_i}\}_{i=1}^M\in \D(\rhoab),\exists \{r_l,\ket{\phi_l}\}^{N}_{l=1}\in \D(\sigma^{AB})\  : \\ 
    &\sum_{i=1}^M q_i \mu^\downarrow(\ket{\psi_i}\bra{\psi_i})\preceq \sum^N_{l =1} r_l \mu^\downarrow(\ket{\phi_l}\bra{\phi_l}).
\end{split}
\end{equation}

\end{lemma}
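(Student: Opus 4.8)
The plan is to apply the pure-to-mixed conversion criterion, Theorem~\ref{th:plenio_th}, \emph{locally} to each pure state of the given decomposition of $\rho^{AB}$, and then to glue the resulting pieces together into a single decomposition of $\sigma^{AB}$. The only nontrivial point will be the final majorization between the two averaged vectors, and it will rest crucially on the fact that the set $\Ud$ is built from the \emph{sorted} Schmidt vectors $\mu^\downarrow$.

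First I would fix an LOCC channel $\Lambda$ realizing $\rho^{AB}\locc\sigma^{AB}$, so that $\sigma^{AB}=\Lambda(\rho^{AB})$, together with an arbitrary decomposition $\{q_i,\ket{\psi_i}\}_{i=1}^M\in\D(\rho^{AB})$. By linearity, $\sigma^{AB}=\sum_i q_i\,\sigma_i^{AB}$ with $\sigma_i^{AB}:=\Lambda(\ket{\psi_i}\bra{\psi_i})$ a genuine state (here I use that $\locc$ is realized by a trace-preserving LOCC map), and plainly $\ket{\psi_i}\bra{\psi_i}\locc\sigma_i^{AB}$. Theorem~\ref{th:plenio_th} then supplies, for each $i$, a decomposition $\{p_m^{(i)},\ket{\phi_m^{(i)}}\}_{m=1}^{M_i}\in\D(\sigma_i^{AB})$ such that $\mu^\downarrow(\ket{\psi_i}\bra{\psi_i})\preceq b_i$, where $b_i:=\sum_{m}p_m^{(i)}\mu^\downarrow(\ket{\phi_m^{(i)}}\bra{\phi_m^{(i)}})$ (that one may write $\mu^\downarrow$ in place of $\mu$ on the left is harmless, since $\preceq$ depends only on the rearranged vectors). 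Concatenating over $i$, the family $\{q_i p_m^{(i)},\ket{\phi_m^{(i)}}\}_{i,m}$ is a decomposition of $\sigma^{AB}$ — its weights sum to $\sum_i q_i=1$ — and relabelling it as $\{r_l,\ket{\phi_l}\}_{l=1}^N$ gives the claimed element of $\D(\sigma^{AB})$. What then remains is to prove $\sum_i q_i\,\mu^\downarrow(\ket{\psi_i}\bra{\psi_i})\preceq\sum_{i,m}q_i p_m^{(i)}\,\mu^\downarrow(\ket{\phi_m^{(i)}}\bra{\phi_m^{(i)}})=\sum_i q_i b_i$.

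For this last step I would note that $a_i:=\mu^\downarrow(\ket{\psi_i}\bra{\psi_i})$ has non-increasing entries by definition, and so does $b_i$, being a convex combination of non-increasingly ordered vectors; consequently $\sum_i q_i a_i$ and $\sum_i q_i b_i$ are non-increasing as well. Hence, for every $k$, $s_k$ evaluated on any of these vectors is just the sum of its $k$ largest components, so $a_i\preceq b_i$ is literally $s_k(a_i)\le s_k(b_i)$; averaging with the weights $q_i\ge 0$ yields $s_k\!\big(\sum_i q_i a_i\big)=\sum_i q_i\,s_k(a_i)\le\sum_i q_i\,s_k(b_i)=s_k\!\big(\sum_i q_i b_i\big)$ for all $k$, which is exactly the desired majorization. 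The subtle point I would flag is precisely this maneuver: majorization is \emph{not} preserved under arbitrary convex combinations, so the per-$i$ relations $a_i\preceq b_i$ cannot be combined naively; it is only because $\Ud$ is defined through the sorted vectors $\mu^\downarrow$ — so that every summand is ordered the same way and $s_k$ becomes additive under convex combinations — that the argument closes. Everything else is routine bookkeeping.
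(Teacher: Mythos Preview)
Your proof is correct and follows essentially the same line as the paper's: apply the LOCC to each $\ket{\psi_i}$, invoke Theorem~\ref{th:plenio_th} to obtain per-$i$ majorization, and then combine by convexity of sorted vectors. The only cosmetic differences are that the paper constructs the output decomposition explicitly from a Kraus representation of $\Lambda$ (rather than invoking Theorem~\ref{th:plenio_th} abstractly for each $\sigma_i^{AB}$) and outsources your final ``averaging of sorted majorizations'' step to \cite[Lemma~32]{Bosyk2021}, whereas you spell that argument out directly.
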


\begin{proof}
    
    Let $\Lambda$ an arbitrary LOCC with Kraus operators $\{K_n\}_{n=1}^N$ such that maps $\rhoab$ into  $\sigma^{AB}$. 
    In other words, $\sigma^{AB}=\Lambda(\rhoab)=\sum_n K_n \rhoab K_n^\dagger$. In addition consider a pure state decomposition of $\rhoab$, $\{q_k, \ket{\psi_k}\}_{k=1}^M$, then
    \begin{equation}
    \sigma^{AB}=\sum_{n=1}^N \sum_{k=1}^M q_k K_n \ket{\psi_k} \bra{\psi_k} K^\dag_n=\sum_{n=1}^{N}\sum_{k=1}^{M} q_k p_{n,k}\ket{\phi_{n,k}}\bra{\phi_{n,k}},    
    \end{equation}
    where $p_{n,k}= \tr(K_n \ket{\psi_k}\bra{\psi_k} K_n^\dagger)\ $ and $\ \ket{\phi_{n,k}}=\frac{K_n\ket{\psi_k}}{\sqrt{p_{n,k}}}$.
    In particular, each $\ket{\psi_k}\bra{\psi_k}$ will be transformed via LOCC to $\sum_{k} p_{n,k} \ket{\phi_{n,k}}\bra{\phi_{n,k}}$. 
    Then, considering Theorem \ref{th:plenio_th}, we obtain
    \begin{equation}
    \mu^\downarrow(\ket{\psi_k}\bra{\psi_k})\preceq \sum_{n=1}^N p_{n,k} \mu^\downarrow(\ket{\phi_{n,k}}\bra{\phi_{n,k}}.    
    \end{equation}
   Now, we can apply the Lemma 32 from  \cite{Bosyk2021}. 
   Thus, there is the following majorization relation between the sum of the vectors:
    \begin{equation}
    \sum_{k=1}^{M} q_k \mu^\downarrow(\ket{\psi_k}\bra{\psi_k})\preceq\sum_{n=1}^{N}\sum_{k=1}^{M} q_k p_{n,k}\mu^\downarrow(\ket{\phi_{n,k}}\bra{\phi_{n,k}}).    
    \end{equation}
    Furthermore $\{q_k p_{n,k},\ket{\phi_{n,k}}\}$ is a pure state decomposition of $\sigma^{AB}$.
    This construction is valid for every pure state decomposition of $\rhoab$.
\end{proof}

\subsection{Proofs of Section~\ref{sec:schimdt_vector}}

Proof of Theorem~\ref{th:equivalence_extensions} (Equivalence of both extensions):

\begin{proof}

\hfill

\begin{enumerate}
    \item  \label{item:ida} Firstly, we show that $\nu_{\D}(\rhoab) \preceq    \nu_{\O}(\rhoab)$.\\
    Given $v\in\Ud(\rhoab)$ there exists a pure state decomposition $\{q_i,\ket{\psi_i}\}_{i=1}^{M}$ of $\rhoab$ such that $v= \sum_{i= 1}^{M} q_i \mu^\downarrow(\ket{\psi_i}\bra{\psi_i})$. Moreover, there is a pure state $\ket{\psi} = \sum_{i=1}^{d} \sqrt{v_i}\ket{a_i b_i} \in \H^{AB} $ such that $\mu^\downarrow(\ket{\psi}\bra{\psi})=v=\sum_n q_i \mu^\downarrow(\ket{\psi_k}\bra{\psi_k})$. 
    By Theorem~\ref{th:plenio_th} we have   $\ket{\psi}\bra{\psi}\locc\rhoab$, then $v\in \Uo(\rhoab)$. Therefore, $\Ud(\rhoab)\subseteq \Uo(\rhoab)$, which implies $\bigvee \Ud(\rhoab) \preceq \bigvee \Uo(\rhoab)$.

    \item \label{item:vuelta} Secondly, we show that $\nu_{\O}(\rhoab) \preceq    \nu_{\D}(\rhoab)$.\\
    Given $u\in\Uo(\rhoab)$ there is a pure state $\ket{\psi} \in \H^{AB}$ such that  $\mu^\downarrow(\ket{\psi}\bra{\psi})=u$ and $\ket{\psi}\bra{\psi}\locc\rhoab$. By Theorem  \ref{th:plenio_th}, there is $u' =\sum_{i=1}^M q_i \mu^\downarrow(\ket{\psi_i}\bra{\psi_i})\in \Ud(\rhoab)$ such that $u \preceq u'$. Therefore, $\forall u\in \Uo$, there is $u'\in \Ud$ such that $u\preceq u'$, which implies  $\bigvee \Uo(\rhoab) \preceq \bigvee \Ud(\rhoab)$.
    
\end{enumerate}
Finally, both results imply $\nu_{\D}(\rhoab) = \nu_{\O}(\rhoab)$.

\end{proof}

Proof of Theorem~\ref{th:Schmidt_properties} (Schmidt vector properties):

\begin{proof}
\hfill
\begin{enumerate}
    \item %Top only attainable for separable states: 

    \begin{itemize}
        \item[$(\Rightarrow)$] 
        Let $\rhoab$ be an arbitrary separable bipartite state. 
        We can express $\rho^{AB}$ as a convex combination of pure product states that act on  $\H^A$ and $\H^B$, respectively, i.e., $\rho^{AB}= \sum_i^M p_i \ket{\psi_i}\bra{\psi_i} \otimes \ket{\phi_i}\bra{\phi_i}$, where $p_i \geq 0$, $\sum_i p_i=1$. 
        Then, ${\left\lbrace  p_i, \ket{ \psi_i \phi_i} \right\rbrace }_{i = 1}^M  \in  \D(\rhoab)$, and $\sum_{i = 1}^{M}  p_i \mu^\downarrow(\ket{\psi_i \phi_i}\bra{\psi_i \phi_i}) = \sum_i p_i(1,0,\ldots,0)=(1,0,\ldots,0) \in \Ud(\rhoab)$. 
        Therefore, $\nu(\rhoab) = \bigvee \Ud(\rhoab)=(1,0, \ldots, 0)$.  
        
        \item[$(\Leftarrow\big)$] Let $\rhoab$ be a bipartite
        state such that $\nu(\rhoab) = (1,0,\ldots,0)$.
        By definition, we have that the first entry of the supremum is given by $L_{\bigvee \Ud(\rhoab)} (1)$, which by construction is equal to 
        \begin{equation}
           \nu_1(\rhoab) = \sup_{\{q_i, \ket{\psi_i}\}_{i= 1}^M  \in \D(\rhoab)}  \left\{ \sum_{i = 1}^{M}  q_i \mu_1^\downarrow(\ket{\psi_i}\bra{\psi_i} )  \right\}.   
        \end{equation}
        Since $\mu_1^\downarrow (\ket{\psi_i}\bra{\psi_i}) = s_1 (\mu^\downarrow (\ket{\psi_i}\bra{\psi_i})$, and due to Lemma \ref{lemma:Sj}, there is an optimal pure state decomposition $\{\tilde{q}_i, \ket{\tilde{\psi}_i}\}_{i= 1}^M \in \D(\rhoab)$ such that
        \begin{equation}
           \nu_1(\rhoab) = \sum_{i = 1}^{M}  \tilde{q}_i \mu_1^\downarrow(\ket{\tilde{\psi}_i}\bra{\tilde{\psi}_i}).   
        \end{equation}
        Then, $1 = \nu_1(\rhoab) =   \sum_{i = 1}^{M}  \tilde{q}_i \mu_1^\downarrow(\ket{\tilde{\psi}_i}\bra{\tilde{\psi}_i})$.Without loss of generality we can assume that $\tilde{q}_i >0$. 
        This is only possible if  $\mu^\downarrow(\ket{\tilde{\psi}_i}\bra{\tilde{\psi}_i}) = (1, 0,\ldots, 0)$ for all  $1\leq i \leq M$. This implies that for for all  $1\leq i \leq M$,  $\ket{\tilde{\psi}_i}\bra{\tilde{\psi}_i}$ is a product state, i.e., $\ket{\tilde{\psi}_i}\bra{\tilde{\psi}_i} = \ket{a_i}\bra{a_i} \otimes \ket{ b_i}\bra{ b_i}$ for some $\ket{a_i }\bra{a_i} \in \S(\H^{A})$ and  $\ket{b_i }\bra{b_i} \in \S(\H^{B})$. 
       Therefore, we have that $\rhoab =\sum_{i = 1}^{M}  \tilde{q_i}     \ket{\tilde{\psi}_i}\bra{\tilde{\psi}_i} = \sum_{i = 1}^{M}  \tilde{q_i}  \ket{a_i}\bra{a_i} \otimes \ket{ b_i}\bra{ b_i}$, so it is separable.

    \end{itemize}

\item %Proof of monotonicity under LOCC:

Let $\rhoab$ be a bipartite state and $\Lambda$ an arbitrary LOCC. We define $\sigmaab = \Lambda(\rhoab)$.
We are going to show that $\nu(\sigmaab)$ is an upper bound of $\Ud(\rho^{AB})$.
Given an arbitrary $u \in \Ud(\rho^{AB})$, we have that there is a pure state decomposition $\{q_i,\ket{\psi_i}\}_{i=1}^M\in \D(\rhoab)$ such that $u = \sum_{i=1}^M q_i \mu^\downarrow(\ket{\psi_i}\bra{\psi_i}) $.
Since $\rhoab \locc \sigmaab$,
due to Lemma~\ref{lemma:state_transf_mixed}, we have that there exists $\{r_l,\ket{\phi_l}\}^{N}_{l =1}\in \D(\sigma^{AB})$ such that $u \preceq \sum^{N}_{l=1} r_l \mu^\downarrow(\ket{\psi_l}\bra{\psi_l})$.
Since $\nu(\sigmaab) = \bigvee \Ud(\sigmaab)$, we have
\begin{equation}
u \preceq \sum^N_{l =1} r_l \mu^\downarrow(\ket{\psi_l}\bra{\psi_l}) \preceq \nu(\sigma^{AB}).    
\end{equation}
This implies that $\nu(\sigma^{AB})$ is an upper bound of $\Ud(\rho^{AB})$. 
Therefore, $\nu(\rhoab) \preceq \nu(\sigmaab)$.

\item %Proof of strong monotonicity under LOCC:

    Let $\rhoab$ be a bipartite state and $\Lambda$ an arbitrary LOCC, given by Kraus opertors $\{K_n\}_{n=1}^{N}$.
    We define $\sigma^{AB}= \Lambda(\rhoab)= \sum_{n=1}^N p_n \sigma_n $, where  $p_n=\tr(K_n \rhoab K_n^\dagger)$ and $\sigma_n=\frac{K_n \rhoab K_n^\dagger}{p_n}$.
    We are going to show that $\sum_{n=1}^N p_n \nu(\sigma_n)$ is an upper bound of $\Ud(\rho^{AB})$.

    Given an arbitrary $u \in \Ud(\rho^{AB})$, we have that there is a pure state decomposition $\{q_i,\ket{\psi_i}\}_{i=1}^M\in \D(\rhoab)$ such that $u = \sum_{i=1}^M q_i \mu^\downarrow(\ket{\psi_i}\bra{\psi_i}) $.
    We can express $\sigma_n=\sum_{i=1}^M \frac{q_i p_{n,i}}{p_n} \ket{\phi_{n,i}}\bra{\phi_{n,i}}$, where $p_{n,i}= \tr(K_n \ket{\psi_i}\bra{\psi_i} K_n^\dagger)$ and $\ket{\phi_{n,i}}=\frac{K_n\ket{\psi_i}}{\sqrt{p_{n,i}}}$.

    On the one hand, we have $\Lambda(\ket{\psi_i} \bra{\psi_i}) = \sum_n K_n\ket{\psi_i} \bra{\psi_i}     
    K^\dag_n = \sum_n  p_{n,i}\ket{\phi_{n,i}}\bra{\phi_{n,i}}$. Then, considering Theorem \ref{th:plenio_th}, we obtain $ \mu^\downarrow(\ket{\psi_i} \bra{\psi_i}) \preceq  \sum_n  p_{n,i} \mu^\downarrow(\ket{\phi_{n,i}}\bra{\phi_{n,i}})$.      
    Then,  from  \cite[Lemma 32]{Bosyk2021}, we infer 
    \begin{equation}
    \label{eq:1era_iii_th10}
    u = \sum_{i=1}^M q_i \mu^\downarrow(\ket{\psi_i}\bra{\psi_i}) \preceq \sum_{n=1} ^N \sum_{i=1}^M q_i p_{n,i} \mu^\downarrow(\ket{\phi_{n,i}}\bra{\phi_{n,i}}).    
    \end{equation}
    On the other hand, since $\sum_{i=1}^M \frac{q_i p_{n,i}}{p_n} \mu^\downarrow(\ket{\phi_{n,i}}\bra{\phi_{n,i}}) \in \Ud(\sigma_n)$,
    we have     
    \begin{equation}
    \sum_{i=1}^M \frac{q_i p_{n,i}}{p_n} \mu^\downarrow(\ket{\phi_{n,i}}\bra{\phi_{n,i}})\preceq \nu(\sigma_n).    
    \end{equation}
    
    Multiplying by $p_n$ and summing over $n$, we obtain
    \begin{equation}
    \label{eq:3era_iii_th10}
    \sum_{n=1} ^N \sum_{i=1}^M q_i p_{n,i} \mu^\downarrow(\ket{\phi_{n,i}}\bra{\phi_{n,i}})\preceq \sum_{n=1}^N p_n \nu(\sigma_n).  
    \end{equation}

    From \eqref{eq:1era_iii_th10} and \eqref{eq:3era_iii_th10} we obtain 
    $u \preceq \sum_{n=1}^N p_n \nu(\sigma_n)$. This implies that  $\sum_{n=1}^N p_n \nu(\sigma_n)$ is an upper bound of $\Ud(\rho^{AB})$.
    Therefore, $\nu(\rho^{AB})\preceq\sum_{n=1}^N p_n \nu(\sigma_n)$.

\item %Proof of bottom only for maximally entangled states:

 \begin{enumerate}
        \item[($\Rightarrow)$] If   $\rhoab$ is of the form $\rhoab = \ket{\psi^{\max}}\bra{\psi^{\max}}$ with $\ket{\psi^{\max}} = \sum^d_{i=1} 1/\sqrt{d} \ket{a_i} \ket{b_i}$, then $\nu(\rhoab)=\mu^\downarrow\left(\ket{\psi^{\max}}\bra{\psi^{\max}}\right) = \left(1/d,\ldots, 1/d\right)$. 
        
        \item[($\Leftarrow)$] Let $\rhoab$ be a bipartite state such that $\nu(\rhoab)=(1/d,\ldots,1/d)$.  
        
        First, we notice that if $\rhoab$ is a pure bipartite state, $\rhoab=\ket{\psi}\bra{\psi}$, it follows that $\ket{\psi}=\ket{\psi^{\max}}$. 
        
        Second, we are going to show that $\rhoab$ has to be a pure state.
        % by appealing to \textit{reductio ad absurdum}.
        Let assume that $\rhoab$ is a mixed state. Let $\{q_i,\ket{\psi_i}\}^M_{i=1}$ be an arbitrary pure state decomposition of $\rhoab$. On the one hand, by definition of supremum, we have that $\sum^M_{i=1} q_i \mu^\downarrow\left(\ket{\psi_i}\bra{\psi_i}\right) \preceq \nu(\rhoab)$. 
        On the other hand,  since $\nu(\rhoab)=(1/d,\ldots,1/d)$ is the bottom of the majorization lattice, we have  $\nu(\rhoab) \preceq \sum^M_{i=1} q_i \mu^\downarrow\left(\ket{\psi_i}\bra{\psi_i}\right)$.
        Then, $v(\rhoab)  = \sum^M_{i=1} q_i \mu^\downarrow\left(\ket{\psi_i}\bra{\psi_i}\right)$. 
        Moreover, since $(1/d,\ldots,1/d)$ is an extreme point of the $d-1$--simplex, then $\mu^\downarrow\left(\ket{\psi_i}\bra{\psi_i}\right)=(1/d,\ldots,1/d)$ for all $i \in \{1, \ldots M\}$.
        This implies that any pure state decomposition of $\rhoab$ is only formed by maximally entangled states.
        
        In particular, let consider the spectral decomposition of $\rhoab$
        \begin{equation}
        \rhoab=\sum_{j=1}^{d^{AB}} \lambda_j \ket{e_j}\bra{e_j}, ~~~ d^{AB} = d^A d^B.    
        \end{equation}
      
        According to the last implication, the eigenvectors $\ket{e_j}$ have to be maximally entangled states. 
        Given that by hypothesis $\rhoab$ is a mixed state, at least two eigenvalues are different from zero. 
        Without loss of generality, we consider $\lambda_1,\lambda_2 > 0$.
        
        We are going to construct another pure state decomposition $\{p_i,\ket{\phi_i}\}^{d^{AB}}_{i=1}$ of $\rhoab$ for which  $\ket{\phi_1}$ is not a maximally entangled state. 
        By using the Schr\"odinger mixture theorem with the unitary matrix $U$
		\begin{equation}
		U = \begin{pmatrix}
		\begin{matrix}
		U_{11} & U_{12} \\
		-U_{12} & U_{11}
		\end{matrix}
		& \rvline & \mathbb{0} \\
		\hline
		\mathbb{0} & \rvline &
		\mathbb{1}_{d^{AB}-2}
		\end{pmatrix},
		\end{equation}
where $U_{11}=\sqrt{\lambda_2/(\lambda_1+\lambda_2)}$ and $U_{12}=\sqrt{\lambda_1/(\lambda_1+\lambda_2)}$, we obtain that $\ket{\phi_1}$ is a superposition of two maximally entangled states, that is,
\begin{equation}
\ket{\phi_1} =\frac{1}{\sqrt{p_1}}(\sqrt{\lambda_1} U_1 \ket{e_1}+\sqrt{\lambda_2}U_2\ket{e_2}).
\end{equation}
Given that the superposition of two maximally entangled states does not result in a maximally entangled state, it follows that $\ket{\phi_1}$ cannot be a maximally entangled state. This presents a contradiction. Consequently, $\rhoab$ cannot be a mixed state; it has to be a pure state.

\end{enumerate}

\end{enumerate}

\end{proof}

\subsection{Proofs of Section~\ref{sec:equivalence_Schmidtrank_Schmidtnumber}}

Proof of Theorem~\ref{th:equivalence_Schmidtrank_Schmidtnumber} (equivalence between Schmidt rank and Schmidt number).

\begin{proof}
Given a bipartite state $\rhoab$, let us define $r = \rk(\rhoab)$ and $n = \sn(\rhoab)$. 

\begin{itemize}

    \item[\textit{(i)}] First, we show that $r\leq n$:

By definition of the Schmidt number, there is a pure state decomposition  $\{p_i,\ket{\psi_i}\}_{i= 1}^{M}$ of $\rhoab$ such that $\rk(\ket{\psi_i}\bra{\psi_i})\leq n$ for all $1\leq i \leq M$. Therefore, $v_n = (1/n, \ldots,1/n,0,\ldots,0)\preceq \mu^\downarrow(\ket{\psi_i}\bra{\psi_i})$ for all $1\leq i \leq M$, which implies $v_n \preceq \sum^M_{i=1} p_i\mu^\downarrow(\ket{\psi_i}\bra{\psi_i})$ (see \cite[Lemma 32]{Bosyk2021}). By definition of the Schmidt vector, we have  $\sum^M_{i=1} p_i\mu^\downarrow(\ket{\psi_i}\bra{\psi_i}) \preceq\nu(\rhoab)$, then $v_n \preceq \nu(\rhoab)$.
This implies that $\nu(\rhoab)$ has a support of at most $n$ components, that is, $r = \#\supp(\nu(\rhoab)) \leq n$.

 \item[\textit{(ii)}] Second, we show that $n\leq r$:
 
If $r= d$, it is trivial, so we suppose $r <d$.
Since $r = \#\supp(\nu(\rhoab))$,  then $\nu_{k}(\rhoab) = 0$, for $r < k \leq d$.

Since $\nu(\rhoab)$ is the supremum of $\Ud(\rhoab)$, its components are given by $\nu_{k}(\rhoab)=  L_{\bigvee \Ud(\rhoab)}(k)-L_{\bigvee \Ud(\rhoab)}(k-1)$.
In particular, for $k = r+1$ we have $L_{\bigvee \Ud(\rhoab)}(r+1) = L_{\bigvee \Ud(\rhoab)}(r)$. Then, by construction of the upper envelope, we conclude that $S_{r} = S_{r+1}$.

Due to Lemma \ref{lemma:Sj} 
there are optimal pure state decompositions $\{q^{(r)}_i, \ket{\psi^{(r)}_i}\}_{i= 1}^{M_{r}}\in \D(\rhoab)$ and  $\{q^{(r+1)}_i, \ket{\psi^{(r+1)}_i}\}_{i= 1}^{M_{r+1}}  \in \D(\rhoab)$ such that:
\begin{align} 
S_{r} &= \sum_{i = 1}^{M_{r}} q^{(r)}_i s_{r}\left(\mu^\downarrow(\ket{\psi^{(r)}_i}\bra{\psi^{(r)}_i})\right),  \\
S_{r+1} &= \sum_{i = 1}^{M_{r+1}} q^{(r+1)}_i s_{r+1}\left(\mu^\downarrow(\ket{\psi^{(r+1)}_i}\bra{\psi^{(r+1)}_i})\right).  
\end{align}
Without loss of generality, we can assume that $q_i^{(r)}>0$ and $q_i^{(r+1)}>0$ for all $1\leq i \leq M$.
             
On the one hand, we have:
\begin{equation}
S_{r} = \sum_{i = 1}^{M_{r}} q^{(r)}_i s_{r}\left(\mu^\downarrow(\ket{\psi^{(r)}_i}\bra{\psi^{(r)}_i})\right) \leq  \sum_{i = 1}^{M_{r}} q^{(r)}_i s_{r+1}\left(\mu^\downarrow(\ket{\psi^{(r)}_i}\bra{\psi^{(r)}_i})\right).
\end{equation}

On the other hand, by definition of the supremum $S_{r+1}$, we have:
\begin{equation}
\sum_{i = 1}^{M_{r}} q^{(r)}_i s_{r+1}\left(\mu^\downarrow(\ket{\psi^{(r)}_i}\bra{\psi^{(r)}_i})\right) \leq S_{r+1}
\end{equation}
Therefore, since $S_r = S_{r+1}$, we have
\begin{equation}
\sum_{i = 1}^{M_{r}} q^{(r)}_i s_{r}\left(\mu^\downarrow(\ket{\psi^{(r)}_i}\bra{\psi^{(r)}_i})\right) =   \sum_{i = 1}^{M_{r}} q^{(r)}_i s_{r+1}\left(\mu^\downarrow(\ket{\psi^{(r)}_i}\bra{\psi^{(r)}_i})\right)
\end{equation}
Equivalently, 
\begin{equation}
\sum_{i = 1}^{M_{r}} q^{(r)}_i \mu_{r+1}^\downarrow(\ket{\psi^{(r)}_i}\bra{\psi^{(r)}_i})  = 0.
\end{equation}
This implies that $\mu_{r+1}^\downarrow(\ket{\psi^{(r)}_i}\bra{\psi^{(r)}_i}) = 0$ for all $1\leq i \leq M_{r}$.
That is, there exist a pure state decomposition $\{q^{(r)}_i, \ket{\psi^{(r)}_i}\}_{i= 1}^{M_{r}}\in \D(\rhoab)$ such that $\rk(\ket{\psi^{(r)}_i}\bra{\psi^{(r)}_i})\leq r$ for all $1\leq i \leq M_{r}$.  
Finally, considering the definition of the Schmidt number, we have that $n\leq r$.

 \item[\textit{(iii)}]  Finally, we conclude that $n= r$, that is, $\rk(\rhoab) = \sn(\rhoab)$.
\end{itemize}

\end{proof}

\subsection{Proofs of Section~\ref{sec:Schmidt_vector_entanglement_monotone}}

Proof of Theorem~\ref{th:Schmidt_vector_entanglement_monotone} (Schmidt vector entanglement monotone):

\begin{proof}
\hfill

\begin{enumerate}
    \item %Top only attainable for separable states: 
    Let $\rhoab$ be an arbitrary bipartite state. We have $\nu(\rhoab) \preceq (1, 0, \ldots, 0)$. Since $f \in \F$, 
    $E_f^{\nu}(\rhoab) = f(\nu(\rhoab))  \geq f(1, 0, \ldots, 0) = 0$.
    Moreover, if  $\rhoab$ is a separable state, from property \ref{c1:sep_sch} of Theorem \ref{th:Schmidt_properties}, $\nu(\rhoab) = (1, 0, \ldots, 0)$, and $E_f^{\nu}(\rhoab) = f(1, 0, \ldots, 0) = 0$.
  
    \item  Let $\rhoab$ be an arbitrary bipartite state and $\Lambda$ an arbitrary LOCC. 
    From property \ref{c2:mono_sch} of Theorem \ref{th:Schmidt_properties}, we have $\nu(\rhoab)\preceq\nu(\Lambda (\rhoab))$.
    Since, $f$ is Schur-concave, $f (\nu(\rhoab)) \geq 
    f(\nu(\Lambda(\rhoab)))$, that is, $E_f^{\nu}(\rhoab) \geq E_f^{\nu}(\Lambda(\rhoab))$.

    \item  Let $\rhoab$ be an arbitrary bipartite state and $\Lambda$ an arbitrary LOCC. 
    We define $\sigma_{AB}=\Lambda(\rho_{AB})=\sum_{n=1}^N p_
n \sigma_n$ with $\sigma_n=\frac{K_n \rhoab K_n^\dagger}{p_n}$ and $p_n=\Tr(K_n \rhoab K_n^\dagger)$, where $\{K_n\}_{n=1}^N$ are the Krauss operators characterizing the LOCC.
    Then, from property~\ref{c3:strong_mono_sch}  of Theorem \ref{th:Schmidt_properties}, we have that
    \begin{equation}
    \nu(\rhoab)\preceq \sum_{n=1}^N p_n \nu(\sigma_n).    
    \end{equation}

    Then,
    \begin{align}
        E_f^{\nu}(\rhoab) &= f(\rhoab) \\
        &\geq f\left(\sum_{n=1}^N p_n \nu(\sigma_n)\right) \\
        &\geq \sum_{n=1}^N p_n f\left(\nu(\sigma_n)\right) =  \sum_{n=1}^N p_n E_f^{\nu}\left(\nu(\sigma_n)\right),
    \end{align}
    where the first inequality follows from the Schur-concavity of $f$, and the second inequality follows from concavity of $f$.

    \item 
    Let $\rhoab_{\max}$ be a maximally entangled state. 
    From property~\ref{c4:maxent_sch} of Theorem \ref{th:Schmidt_properties}, $\nu(\rhoab_{\max})=(1/d, \ldots,1/d)$. 
    Moreover, since $f$ is Schur-concave and $(1/d, \ldots, 1/d) \preceq u$ for all $u \in \Delta_d$, we have $f(u) \leq f(1/d, \ldots, 1/d)$ for all $u \in \Delta_d$. Therefore, for any $\rhoab \in \Hab$,  $f(\nu(\rhoab))\leq f(\nu(\rhoab_{\max}))$, that is,  $E_f^{\nu} (\rhoab) \leq E_f^{\nu} (\rhoab_{\max})$.   
  
\end{enumerate}
\end{proof}

\section{Numerical calculations of the  Schmidt vector}
\label{app:numerical}

In this Appendix, we will illustrate a possible way to obtain the Schmidt vector $\nu(\rhoab) =(\nu_1(\rhoab),\nu_2(\rhoab),\ldots \nu_d(\rhoab))$ numerically. 
This methodology is mainly based on \cite{Alvarez2023}.

The supremum of a set $\U$ can be computed as follows. First, for each $1 \leq j \leq d$, we obtain  $S_j = \sup\{s_j(u) : u \in \U \}$,
where $s_j(u)=\sum^j_{i=1} u_i$. 
Second, we compute the upper envelope of the polygonal curve given by the linear interpolation of the set of points $\{(j,S_{j})\}_{j= 0}^d$, with the convention $S_{0} = 0$.
The result is the Lorenz curve of $\bigvee \U$, $L_{\bigvee \U}$. 
Finally, we have $\bigvee \U = (L_{\bigvee \U)}(1),L_{\bigvee \U}(2)-L_{\bigvee \U}(1),\ldots,L_{\bigvee \U}(d)-L_{\bigvee \U}(d-1))$.

The Schmidt vector of $\rhoab$, $\nu(\rhoab)$, is the supremum of the set $\Ud(\rhoab)$. So, first, we have to compute 
$S_j = \sup\{s_j(u) : u \in \Ud(\rhoab) \}$.

We define $S^M_j = \sup\{s_j(u) : u \in \Ud_M(\rhoab) \}$, with 
\begin{equation}
     \Ud_M(\rhoab) = \left\{ \sum_{i = 1}^{M}  q_i \mu^\downarrow(\ket{\psi_i}\bra{\psi_i}) :      \{q_i, \ket{\psi_i}\}_{i= 1}^M  \in \D(\rhoab) \right\}.
 \end{equation}
It should be noted that $\Ud_M(\rhoab) $ only involves pure state decompositions of at most $M$ states, which implies $S_j^M \leq S_j$. 
For suitable values of $M$, $S^M_j$ is sufficiently close to $S_j$. 
In particular, due to Prop. 2.1 in \cite{Uhlmann2010}, if $M\geq (d^{AB})^2 + 1$ (with $d^{AB} = d^A d^B$), $S_j^M = S_j$.

From the Schrödinger mixture theorem (see, for example, \cite{Nielsen2000}), we have that any ensemble $\{q_i,\ket{\psi_i}\}_{i=1}^M$ is a pure state decomposition of $\rhoab$ if, and only if, there exist a $M\times M$ unitary matrix $V$ such that
\begin{equation}
    \sqrt{q_i}\ket{\psi_i}=\sum_{j=1}^d \sqrt{\lambda_j} V_{ji}\ket{e_j},
    \label{th:schordinger}
\end{equation}
where $\lambda_j$ and $\ket{e_j}$ are the j-th eigenvalue and eigenvector of $\rhoab$, respectively. 

Given $V \in U(M)$ (the set of unitary matrices of dimension $M\times M$), we define $g(V,\rhoab)$ as follows
\begin{equation}
g(V,\rhoab) = \sum_{i=1}^M  \eig^\downarrow_i \left(\Tr_B\left( \sum^d_{j \, j'} \sqrt{\lambda_j \lambda_{j'}}V_{ji}\ket{e_j}\bra{e_{j'}}V_{j'i}^*\right)\right),
\end{equation}
where $\eig^\downarrow(\rho)$ denotes the vector formed by the eigenvalues of $\rho$ sorted in a non-increasing order, and $\eig^\downarrow_i(\rho)$ is the $i$-th entry of $\eig^\downarrow(\rho)$.

The set $\Ud_M(\rhoab)$ can be expressed in terms of the set $U(M)$ as follows
\begin{equation}
\Ud_M(\rhoab)=\left\{g(V,\rhoab)\ :\ V\in U(M) \right\}.    
\end{equation}
Then, $S^M_j  = \sup_{V\in U(K)} \{s_j(g(V,\rhoab)) \} $.
Moreover, any unitary matrix $V$ of dimension $M\times M$ can be expressed as $V = \exp{\imath H}$, with $H$ an Hermitian matrix of dimension $M\times M$, and $H$ can be parameterized as a real linear combination of basis elements of $\mbox{Herm} (M)$, the real vector space of Hermitian matrices. If $\{ A_k\}_{k= 1}^{M^2}$ is a basis of $\mbox{Herm} (M)$, we have $H = \sum_{k= 1}^{M^2} \alpha_k A_k$. We chose the basis $\{ A_k\}_{k= 1}^{M^2}$ as follows:
\begin{equation}
E_{ii}, ~ \frac{E_{ij}+ E_{ij}}{\sqrt{2}}, ~ \frac{\imath (E_{ij}- E_{ij} )}{\sqrt{2}}, ~ 1 \leq i, j \leq M.   
\end{equation}
where $E_{ij}$ is the matrix with only one non-zero entry of value one in the $i$-th row and $j$-th column. 
Then, 
$S^M_j  = \sup_{\vec{\alpha}\in \mathbb{R}^{M^2}} \{s_j(g(V(\vec{\alpha}),\rhoab)) \} $, with $V(\vec{\alpha}) = \exp{\imath \sum_{k=1}^{M^2} \alpha_k A_k}$
For $j= 1, \ldots, d-1$, we obtained numerically the supremum $S^M_j$, which approximates $S_j$.

Finally, the numerical approximation of the Schmidt vector is obtained from the upper envelope of the linear interpolation of the points $\{j,S^M_j\}_{j=0}^d$, with $S^M_0=0$.

%%%%%%%%%%%%%%

\end{document}